\newtheorem{definition}{Definition}
\newtheorem{proposition}{Proposition}
\newtheorem{property}{Property}
\newtheorem{lemma}{Lemma}
\newtheorem{theorem}{Theorem}
\newtheorem{corollary}{Corollary}
\newenvironment{proof}{\noindent \textbf{{Proof~} }}{\hfill $\blacksquare$}
\newcommand{\proj}[1]{| #1\rangle\!\langle #1 |}
\newcommand{\tr}{\operatorname{Tr}}
\newcommand{\etal}{\textit{et al}.}
\newcommand{\ie}{\textit{i}.\textit{e}.}
\newcommand{\eg}{\textit{e}.\textit{g}.}
\begin{document}
\title{Near-term Efficient Quantum Algorithms for Entanglement Analysis}
\author{Ranyiliu Chen}
\affiliation{Institute for Quantum Computing, Baidu Research, Beijing 100193, China}
\affiliation{QMATH, Department of Mathematical Sciences, University of Copenhagen, Universitetsparken 5, 2100
Copenhagen, Denmark}

\author{Benchi Zhao}
\affiliation{Institute for Quantum Computing, Baidu Research, Beijing 100193, China}
\affiliation{Thrust of Artificial Intelligence, Information Hub, Hong Kong University of Science and Technology (Guangzhou), Nansha, China}

\author{Xin Wang}
\email{felixxinwang@hkust-gz.edu.cn}
\affiliation{Institute for Quantum Computing, Baidu Research, Beijing 100193, China}
\affiliation{Thrust of Artificial Intelligence, Information Hub, Hong Kong University of Science and Technology (Guangzhou), Nansha, China}
\begin{abstract}
Entanglement plays a crucial role in quantum physics and is the key resource in quantum information processing. However, entanglement detection and quantification are believed to be hard due to the operational impracticality of existing methods. This work proposes three near-term efficient algorithms exploiting the hybrid quantum-classical technique to address this difficulty. The first algorithm finds the Schmidt decomposition--a powerful tool to analyze the properties and structure of entanglement--for bipartite pure states. While the logarithm negativity can be calculated from the Schmidt decomposition, we propose the second algorithm to estimate the logarithm negativity for bipartite pure states, where the width of the parameterized quantum circuits is further reduced. Finally, we generalize our framework for mixed states, leading to our third algorithm which detects entanglement on specific families of states, and determines disdillability in general. All three algorithms share a similar framework where the optimizations are accomplished by maximizing a cost function utilizing local parameterized quantum circuits, with better hardware efficiency and practicality compared to existing methods. The experimental implementation on Quantum Leaf using the IoP CAS superconducting quantum processor exhibits the validity and practicality of our methods for analyzing and quantifying entanglement on near-term quantum devices.

\end{abstract}

\date{\today}
\maketitle

\section{Introduction}
Quantum entanglement \cite{horodecki2009quantum} is the most nonclassical manifestation of quantum mechanics. As the core ingredient in quantum information, quantum entanglement has found use in a variety of areas including quantum cryptography \cite{naik2000entangled, ekert1991quantum}, quantum chemistry \cite{mcardle2020quantum,arute2020hartree}, quantum machine learning \cite{biamonte2017quantum, cerezo2020variational}, and quantum communication \cite{bennett2020quantum, bouwmeester1997experimental, gisin2007quantum}.
	
Detecting the entanglement of a given unknown multiparty state is not simple work, let alone quantifying the entanglement. Positive (but not completely positive) map criteria, such as the well-known positive partial transpose (PPT) criterion \cite{simon2000peres}, can distinguish some entangled states by their nonpositivity after local positive maps. However, this method is less practical compared to its elegant representation from at least two points of view: the positive maps are not physically implementable, and determining positivity relies on the tomography of states \cite{Ariano2004Quantum}, which is very costly. Recently, Wang \etal \cite{wang2022detecting} proposed decomposing the nonphysical positive maps into physical Pauli operations, and estimated the positiveness using hybrid quantum-classical computation \cite{chen2021variational}. This is an effective way to detect entanglement, but the required resources in the decomposition of positive maps is still worrying.
	
The Schmidt decomposition \cite{Nielsen2010} is a fundamental and powerful tool for analyzing the entanglement in bipartite systems. According to the Schmidt decomposition, a pure bipartite state is expressed as a simpler combination of the tensor products of two orthonormal bases of subspace, that is, $\ket{\psi}_{AB} = \sum_j c_j \ket{u_j}_A \otimes \ket{v_j}_B$, where subscripts $A$ and $B$ denote the corresponding subsystems. The positive coefficient in Schmidt decomposition $c_j$ is known as the Schmidt coefficient, and the number of Schmidt coefficients is called Schmidt rank or Schmidt number, determining whether this pure state is entangled or not: if the Schmidt rank equals one, it is safe to say that there is no entanglement between the two parties. Additionally, for entangled states, entanglement measures such as Von Neumann entropy and logarithm negativity are determined by the Schmidt coefficients \cite{Vidal2002Computable}.
 
The most straightforward approach to do Schmidt decomposition on a given quantum system is state tomography \cite{Ariano2004Quantum} followed by a singular value decomposition on a classical computer. Nevertheless, it is known that state tomography consumes a huge amount of state copies, and singular value decomposition demands exponential many classical storages in the number of qubits. Alternatively, several variational quantum algorithms \cite{larose2019variational,bravo2020quantum} have been proposed to leverage the power of hybrid quantum-classical computing. In Ref. \cite{larose2019variational}, the Schmidt coefficients are read out from the variationally diagonalized marginal states. In Ref. \cite{bravo2020quantum}, local unitaries are performed and optimized on each party, and when the measurement of each party coincides the Schmidt coefficients can be evaluated.
	
In this work, we present variational algorithms for entanglement analysis that have better stability or efficiency than existing approaches. In our first algorithm for Schmidt decomposition for pure states, bilocal parameterized quantum circuits (PQCs) or quantum neural networks act on the two subsystems (inspired by Ref. \cite{bravo2020quantum,chen2021variational}), followed by a depth-2 subcircuit that calculates the fidelity with another elaborated entangled state. Von Neumann's trace inequality \cite{horn2012matrix} guarantees that when this fidelity is maximized through tuning the circuit parameters, the Schmidt decomposition is accomplished. By exploiting symmetry, we further introduce our second algorithm to estimate the logarithm negativity of bipartite pure states, where the reduction of one side of the PQC improves its efficiency mostly in classical optimization. Taking general mixed states into consideration, we propose the third algorithm for entanglement detection. We show that, it detects entanglement for certain families of mixed states of practical interest, and identifies distillability under a particular protocol. Numerical simulation and quantum device implementation in the (up to) 8-qubit and 2-qubit cases are undertaken, respectively, illustrating the practicality and validity of our algorithms in noisy intermediate-scale quantum (NISQ) devices. Our algorithms thus notably provide efficient and practical ways to analyze and quantify entanglement on near-term quantum devices.
	
The rest of the paper is organized as follows: Section \ref{sec:notations} gives notation and preliminaries used in this paper. Section \ref{sec:theoframe} provides the theoretic foundations of our algorithm design of which the details about cost function evaluation and the adopted optimization are introduced in Section \ref{sec:algo}. In Section \ref{sec:experiments} we exhibit experiments including numerical simulation and quantum device implementation. Finally, Section \ref{sec:conclusion} concludes this work and discusses outlooks.

\section{Notation and preliminaries}\label{sec:notations}

\subsection{Notation}
    In this paper, the standard computational basis vectors of an $n$-qubit system are denoted by $\ket{j}$, $j=0,\dots,2^n-1$, with matrix representations $\ket{0}=(1,0,\dots,0)^\intercal,\ket{1}=(0,1,\dots,0)^\intercal,\dots,\ket{2^n-1}=(0,0,\dots,1)^\intercal$. For composed systems, we write an operator with a subscript indicating the system that the operator acts on, such as $M_{AB}$, and write $M_A := \operatorname{Tr}_B M_{AB}$. We mainly focus on the natural partition of the $2n$-qubit system, which corresponds to a Hilbert space $\mathcal{H}_{AB}=\mathcal{C}^d\otimes\mathcal{C}^d$ where $d=2^n$ is the local dimension.
    
\subsection{Schmidt decomposition for pure states}
    The Schmidt decomposition is a fundamental tool to characterize the entanglement of bipartite pure states. Given a pure state $\ket{\psi}_{AB}$ living in the composed Hilbert space $\mathcal{H}_{AB}$, it always admits the Schmidt decomposition \cite{Nielsen2010}:
	\begin{align*}
		\ket{\psi}_{AB}=\sum_{j=0}^{R-1}c_j\ket{u_j}_A\ket{v_j}_B,
	\end{align*}
    where $\{\ket{u_j}_A\}$ and $\{\ket{v_j}_B\}$ are some orthonormal bases in marginal spaces $\mathcal{H}_A$ and $\mathcal{H}_B$, respectively. Here the \textit{Schmidt coefficients} $c_j>0$ are ordered decreasingly, \ie, $c_j\ge c_k$ for all $j<k$. The number of positive coefficients $R$ is called the \textit{Schmidt rank}, satisfying $R\le\min\{d_A,d_B\}$, where $d_A$ and $d_B$ are the dimensions of $\mathcal{H}_A$ and $\mathcal{H}_B$, respectively.
	
	Once given the Schmidt decomposition, one can deduce entanglement measures from these {Schmidt coefficients}. For example, the logarithm negativity follows as $E_N(\ket{\psi}_{AB})=\log_2N(\ket{\psi}_{AB})$, where $N(\ket{\psi}_{AB})=1/2[(\sum_jc_j)^2-1]$ is the negativity \cite{Vidal2002Computable}.
    
\subsection{Reduction map as a separability criterion}
    The reduction map \cite{cerf1999reduction} $\mathcal{R}$ defined by
    \begin{align*}
        \mathcal{R}(X)=\tr X\cdot I-X
    \end{align*}
    is a positive but not completely positive (PnCP) map. Like every PnCP map, the reduction map $\mathcal{R}$ can be used as a separability criterion, in the sense that any separable state $\rho_{AB}$ satisfies $(\mathcal{R}_A\otimes\operatorname{id}_B)(\rho_{AB})=I\otimes\rho_B-\rho_{AB}\ge0$. The set of bipartite states that are positive under the partial reduction map is called the $\mathcal{R}$ state; then the set of separable states, denoted by $\operatorname{SEP}$, is a subset of the $\mathcal{R}$-state. It is worth mentioning that PPT state is a subset of the $\mathcal{R}$ state, and when $d_Ad_B\le6$ they both become necessary conditions for entanglement \cite{cerf1999reduction}. Also, any state violating the reduction criterion is distillable \cite{horodecki1998reduction}.
    
    \subsection{Maximally entangled states}
    In a composed system $\mathcal{H}_{AB}=\mathcal{C}^d\otimes\mathcal{C}^d$, the maximally entangled states are the states with Schmidt rank $R=d$ and Schmidt coefficients $c_j=1/\sqrt{d}$, \ie, states of the form 
    \begin{align*}
    \ket{\Phi}_{AB}=\frac{1}{\sqrt{d}}\sum_{j=0}^{d-1}\ket{u_j}_A\ket{v_j}_B,
    \end{align*}
    where $\ket{u_j}_A$ and $\ket{v_j}_B$ are some orthonormal bases in $\mathcal{H}_{A}$ and $\mathcal{H}_{B}$, respectively. We denote the set of all maximally entangled states by $\operatorname{MAXE}$:
    \begin{align*}
        \operatorname{MAXE}:=\{\ket{\Phi}|\ket{\Phi}=\frac{1}{\sqrt{d}}\sum_{j=0}^{d-1}\ket{u_j}_A\ket{v_j}_B\}.
    \end{align*}

    It is clear that the maximally entangled states of the same dimension are only different in local orthonormal bases. Among states in $\operatorname{MAXE}$ we denote by $\ket{\Phi^+}_{AB}=1/\sqrt{d}\sum_j\ket{j}_A\ket{j}_B\in\operatorname{MAXE}$ the maximally entangled state in the computational basis. A frequently used identity in entanglement theory about $\ket{\Phi^+}$ is the ``transpose trick'': for any operator $M$, 
    \begin{align}
        (M_A\otimes I_B)\ket{\Phi^+}_{AB}=(I_A\otimes M_B^{\intercal})\ket{\Phi^+}_{AB},
        \label{transpose}
    \end{align}
    where $M^{\intercal}$ denotes the transpose (with respect to the computational basis) of $M$.
    
    We remark that $\operatorname{MAXE}$ can also be expressed as
    \begin{align*}
        \operatorname{MAXE}=\{\ket{\Phi}|\ket{\Phi}=U_A\otimes V_B\ket{\Phi^+}\}
    \end{align*}
    where $U_A,V_B\in\mathcal{U}_d$ are unitaries on $\mathcal{H}_{A}$ and $\mathcal{H}_{B}$, respectively. Note that we can further reduce the unitary operator on $\mathcal{H}_{B}$ via the transpose trick: 
    \begin{align*}
        \operatorname{MAXE}=&\{\ket{\Phi}|\ket{\Phi}=V_B^\intercal U_A\otimes I_B\ket{\Phi^+}\}\\
        =&\{\ket{\Phi}|\ket{\Phi}=U_A\otimes I_B\ket{\Phi^+}\},
    \end{align*}
    since $V_B^\intercal U_A\in\mathcal{U}_d$. 

\section{Theoretical framework}\label{sec:theoframe}

In this section, we provide the theoretical foundation for our variational algorithms. We start with the pure-state case whose Schmidt decomposition is sufficient to characterize its entanglement properties. Then we extend our method by looking at general mixed states and building the connection with the reduction separability criterion.

\subsection{Schmidt decomposition for bipartite pure states}

Consider a bipartite state $\ket{\psi}_{AB}$ operated by \textit{local} unitaries $U_A\otimes V_B$. This unitary operator will not affect any entanglement properties, \eg, the Schmidt coefficients. Our key observation is that, the maximal fidelity (maximized over the unitaries) between $U_A\otimes V_B\ket{\psi}_{AB}$ and some elaborated entangled state reflects the weighted sum of the Schmidt coefficients of $\ket{\psi}_{AB}$. It is rigorously stated in the following theorem: 

\begin{theorem}
	Let $U_A$ and $V_B$ be unitaries on systems $A$ and $B$, respectively. Let $\ket{\Psi}_{AB}=\sum_jp_j\ket{j}_A\ket{j}_B$ where $\{p_j\}$ is positive and ordered \emph{strictly decreasingly}. For any bipartite state $\ket{\psi}_{AB}=\sum_{j}c_j\ket{u_j}_A\ket{v_j}_B$ with decreasing coefficients $c_j$, denote $\ket{\widetilde{\psi}}_{AB}=U_A\otimes V_B\ket{\psi}_{AB}$, it holds that
	\begin{align}
		\max_{U_A,V_B}F(\ket{\widetilde{\psi}}_{AB},\ket{\Psi}_{AB})={\sum_jp_jc_j},
		\label{eq:real}
	\end{align}
	and when the maximal of Eq. \eqref{eq:real} is reached, the Schmidt coefficients can be readout as 
	\begin{align}
	c_j=F({\widetilde{\rho}}_{A},\ket{j}_A),
	\label{eq:valuereadout}
	\end{align}
	where $\widetilde{\rho}_{A}=\operatorname{Tr}_B[\ket{\widetilde{\psi}}\!\bra{\widetilde{\psi}}_{AB}]$.
	Furthermore if $c_j$ is not degenerate, \ie, $c_{j-1}>c_j>c_{j+1}$, the orthonormal vectors can be prepared (up to global phases $e^{i\theta_{A,B}^{(j)}}$) by
	\begin{align}&e^{i\theta_A^{(j)}}\ket{u_j}_A=U_A^\dagger\ket{j}_A,\nonumber\\ &e^{i\theta_B^{(j)}}\ket{v_j}_B=V_B^\dagger\ket{j}_B,
	\label{eq:vectorprepare}
	\end{align}
	for some real $\theta_{A,B}^{(j)}$.
	\label{thm:real}
\end{theorem}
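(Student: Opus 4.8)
The plan is to translate the fidelity optimization into Von Neumann's trace inequality through the matrix representation of bipartite pure states. First I would identify $\ket{\psi}_{AB}=\sum_{ab}C_{ab}\ket{a}_A\ket{b}_B$ with the $d\times d$ coefficient matrix $C$; concretely $\ket{\psi}_{AB}=\sqrt{d}\,(C_A\otimes I_B)\ket{\Phi^+}_{AB}$, and the Schmidt coefficients $c_j$ are exactly the singular values of $C$. In the same way $\ket{\Psi}_{AB}$ corresponds to the diagonal matrix $P=\operatorname{diag}(p_0,p_1,\dots)$, whose singular values are the $p_j$. The local rotation $U_A\otimes V_B$ sends the coefficient matrix to $C\mapsto U_ACV_B^\intercal=:X$ --- this is precisely the transpose trick \eqref{transpose} applied to the $B$ factor --- so that, since $\langle\Psi|\widetilde\psi\rangle=\tr(P^\dagger X)$ and $P$ is real,
\begin{align}
F(\ket{\widetilde\psi}_{AB},\ket{\Psi}_{AB})=|\langle\Psi|\widetilde\psi\rangle|=|\tr(PX)|.
\end{align}
As $U_A,V_B$ range over all unitaries, $V_B^\intercal$ is again an arbitrary unitary, so $X$ ranges over exactly the matrices whose singular values are $\{c_j\}$.

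Next I would invoke Von Neumann's trace inequality: for $P$ and $X$ with singular values $\{p_j\}$ and $\{c_j\}$ (both sorted decreasingly), $|\tr(PX)|\le\sum_jp_jc_j$, and the bound is saturated by choosing $U_A,V_B$ so that the singular vectors of $X$ align with those of $P$. This establishes \eqref{eq:real}. The delicate point, and the step I expect to be the main obstacle, is the equality analysis underlying the read-outs, where I must exploit the \emph{strict} ordering $p_0>p_1>\cdots>0$. I would argue via first-order stationarity: varying $X\mapsto e^{tK_L}Xe^{tK_R}$ with $K_L,K_R$ anti-Hermitian preserves the singular values, and vanishing of $\frac{d}{dt}\operatorname{Re}\tr(PX)$ at a maximizer forces both $XP$ and $PX$ to be Hermitian. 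Eliminating $X^\dagger$ between these two conditions gives $[X,P^2]=0$; since $P^2$ has distinct diagonal entries, $X$ must be diagonal. The rearrangement inequality, made strict by the distinctness of the $p_j$, then pins the diagonal to $c_0,c_1,\dots$ in order, i.e.\ $X=\operatorname{diag}(c_j)$ and $\ket{\widetilde\psi}_{AB}=\sum_jc_j\ket{j}_A\ket{j}_B$.

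Given this canonical form the two read-outs follow. Tracing out $B$ yields $\widetilde\rho_A=XX^\dagger=\sum_jc_j^2\,\proj{j}_A$, so $F(\widetilde\rho_A,\ket{j}_A)=\sqrt{\bra{j}\widetilde\rho_A\ket{j}}=c_j$, which is \eqref{eq:valuereadout}; this conclusion survives degeneracies of the $c_j$, since the diagonal of $X$ is pinned to the $c_j$ irrespective of whether some coincide. For \eqref{eq:vectorprepare} I would compare the two expressions $\sum_jc_j\,(U_A\ket{u_j})(V_B\ket{v_j})=\sum_jc_j\ket{j}\ket{j}$ for $\ket{\widetilde\psi}_{AB}$: when the $c_j$ are nondegenerate the Schmidt decomposition is unique up to conjugate phases on the two factors, which forces $U_A\ket{u_j}=e^{-i\theta_A^{(j)}}\ket{j}$ and $V_B\ket{v_j}=e^{-i\theta_B^{(j)}}\ket{j}$, and hence \eqref{eq:vectorprepare}. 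The parts demanding the most care are the stationarity computation above and the bookkeeping of the linked global phases $\theta_{A,B}^{(j)}$.
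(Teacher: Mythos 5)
Your proposal is correct, and for the bound \eqref{eq:real} it coincides with the paper's proof: the paper applies the operator--vector correspondence ($X=\operatorname{mat}(\ket{\widetilde{\psi}}_{AB})$, $Y=\operatorname{mat}(\ket{\Psi}_{AB})$), which is exactly your coefficient-matrix identification, writes $F=|\operatorname{Tr} X^\dagger Y|$, and invokes Von Neumann's trace inequality, with saturation by the unitaries mapping the Schmidt bases to the computational bases. Where you genuinely diverge is the equality analysis behind the readouts \eqref{eq:valuereadout} and \eqref{eq:vectorprepare}: the paper simply cites the equality conditions of Von Neumann's inequality---at a maximizer, $U_A\ket{u_j}_A$ lies in $\operatorname{span}\{\ket{k}_A : c_k=c_j\}$ (and likewise for $B$)---whereas you re-derive the maximizer structure from scratch by a first-order stationarity argument: $XP$ and $PX$ Hermitian at a critical point, hence $[X,P^2]=0$, hence $X$ diagonal because the $p_j^2$ are distinct, after which the rearrangement inequality and phase alignment pin $X=\operatorname{diag}(c_j)$ up to a global phase. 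Your route is longer but self-contained: it avoids the (less commonly stated) equality case of the trace inequality, it makes explicit where the \emph{strict} decrease of $\{p_j\}$ enters, and it shows transparently that the coefficient readout \eqref{eq:valuereadout} survives degeneracies of the $c_j$, a point the paper handles implicitly through its subspace statement. Both proofs then conclude \eqref{eq:vectorprepare} in the same way, from uniqueness (up to conjugate phases) of Schmidt vectors attached to nondegenerate coefficients.
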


\begin{proof}
Recall the operator-vector correspondence mapping \cite{watrous_2018} defined by the action on base vectors:
\begin{align*}
    \operatorname{vec}(\ket{j}\!\bra{k})=\ket{j}\ket{k}.
\end{align*}

Since the $\operatorname{vec}$ mapping is bijective and isometric, we define its inverse mapping as
\begin{align*}
    \operatorname{mat}(\ket{j}\ket{k}):=\operatorname{vec}^{-1}(\ket{j}\ket{k})=\ket{j}\!\bra{k},
\end{align*}
which is also isometric. Let $X=\operatorname{mat}(\ket{\widetilde{\psi}}_{AB})$ and $Y=\operatorname{mat}(\ket{\Psi}_{AB})$; then $F(\ket{\widetilde{\psi}}_{AB},\ket{\Psi}_{AB})=|\braket{\Psi|\widetilde{\psi}}_{AB}|=|\tr X^\dagger Y|$ by isometry. What is more, $c_j$ and $p_j$ are singular values of $X$ and $Y$ in decreasing orders, respectively, bringing about $|\tr X^\dagger Y|\le\sum_jc_jp_j$ via Von Neumann's trace inequality \cite{horn2012matrix}. To sum up, we have
\begin{align}
    &F(\ket{\widetilde{\psi}}_{AB},\ket{\Psi}_{AB})\nonumber\\
    =&|\bra{\Psi}_{AB}U_A\otimes V_B\ket{\psi}_{AB}|\nonumber\\
    =&|\tr X^\dagger Y|\nonumber\\
    \le&\sum_jc_jp_j.
    \label{eq:orderedsum}
\end{align}
Thus the fidelity will not exceed $\sum_jc_jp_j$. Evidently, when $U_A$ maps $\{\ket{u_j}_A\}$ to $\{\ket{j}_A\}$ and $V_B$ maps $\{\ket{v_j}_B\}$ to $\{\ket{j}_B\}$, the maximal is reached. Then Eq. \eqref{eq:real} holds.

When the maximal of Eq. \eqref{eq:real} is reached, by Von Neumann's trace inequality $U_A\ket{u_j}_A$ (correspondingly, $V_B\ket{v_j}_B$) falls in the subspace spanned by $\{\ket{k}_A|c_k=c_j\}$ (correspondingly, $\{\ket{k}_B|c_k=c_j\}$); thus Eq. \eqref{eq:valuereadout} holds; if $c_j$ is further nondegenerate, then Eq. \eqref{eq:vectorprepare} holds naturally, which completes the proof.
\end{proof}

\textbf{Remark:} Inequality \eqref{eq:orderedsum} was also proved in Refs. [\cite{zhang2016evaluation}, Eq. 5] and [\cite{antipin2020lower}, Eq. 23], and from the von Neumann's trace inequality as well. Beyond the inequality itself, Theorem \ref{thm:real} also shows that, with nondegenerated Schmidt coefficients, the Schmidt vectors can be prepared (up to phases) when the inequality saturates, which is part of the Schmidt decomposition task.

Theorem \ref{thm:real} enables the design of our variational algorithm. The idea is straightforward: we use parameterized quantum circuits to implement $U_A,V_B$ respectively. Then we set the cost function to be some monotonic function of the fidelity $F(\ket{\widetilde{\psi}}_{AB},\ket{\Psi}_{AB})$ (in Sec. \ref{subsec:costfunction} we choose $F^2$ to be the cost function for convenience). By tuning the circuits' parameters we can maximize the cost function. Finally after the optimization we can readout the required Schmidt coefficient. In Section \ref{subsec:costfunction} the evaluation of the cost function and the readout process will be discussed in detail. It is worth noting that, even though we can prepare each decomposition vector $\ket{e_j}_{A,B}$, we might not be able to reconstruct $\ket{\psi}_{AB}$ via $U_A$ and $V_B$ because the relative phases for each component might not be equal, \ie, $\theta^{(j)}_{A,B}$ are not always the same for different $j$.

\subsection{Logarithm negativity for bipartite pure states}

Entanglement measures quantify the entanglement between quantum systems, being a key figure of merit in entanglement distillation and many other protocols. Known methods, however, are not directly applicable in most cases on near-term devices. In this part, we introduce the variational algorithm to estimate the logarithm negativity of pure bipartite states, which consume fewer resources. After extracting all the Schmidt coefficients from the above approach, entanglement measures such as logarithm negativity can be calculated by (classical) postprocessing \cite{Vidal2002Computable}. Significantly, in the case where one would like to estimate the logarithm negativity directly and have no interest in each coefficient, we can provide a simpler variational estimation by substituting $\ket{\Psi}_{AB}$ with the bipartite maximally entangled state and using the transpose trick. Specifically, we have the following corollary.
\begin{corollary}
	Let $U_A$ be a unitary operator on system A. For any bipartite state $\ket{\psi}_{AB}=\sum_{j}c_j\ket{u_j}_A\ket{v_j}_B$ with decreasing coefficients $c_j$, denote $\ket{\widetilde{\psi}}_{AB}=U_A\otimes I_B\ket{\psi}_{AB}$, it holds that
	\begin{align}
		&\max_{U_A}F(\ket{\widetilde{\psi}}_{AB},\ket{\Phi^+}_{AB})\nonumber\\
  =&\max_{\ket{\Phi}\in\operatorname{MAXE}}F(\ket{{\psi}}_{AB},\ket{\Phi})\nonumber\\
  =&\frac{\sum_jc_j}{\sqrt{d}},
		\label{eq:norm}
	\end{align}
	where $\operatorname{MAXE}$ is the set of all maximally entangled states.
	\label{cor:norm}
\end{corollary}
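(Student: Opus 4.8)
The plan is to reduce the corollary to Theorem~\ref{thm:real} by recognizing that the maximally entangled state $\ket{\Phi^+}_{AB}$ is exactly the special case of $\ket{\Psi}_{AB}$ in which all weights are equal, $p_j = 1/\sqrt{d}$. The only wrinkle is that Theorem~\ref{thm:real} requires the $p_j$ to be \emph{strictly} decreasing, which fails here, so I cannot quote that theorem verbatim and must instead reuse its engine — Von Neumann's trace inequality via the $\operatorname{vec}/\operatorname{mat}$ correspondence — directly.

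First I would establish the middle equality in Eq.~\eqref{eq:norm}, namely $\max_{U_A}F(\ket{\widetilde\psi}_{AB},\ket{\Phi^+}_{AB})=\max_{\ket{\Phi}\in\operatorname{MAXE}}F(\ket{\psi}_{AB},\ket{\Phi})$. This is where the transpose trick and the reduced description of $\operatorname{MAXE}$ from the preliminaries do the work. Using the identity $\operatorname{MAXE}=\{U_A\otimes I_B\ket{\Phi^+}\}$ together with Eq.~\eqref{transpose}, applying the local unitary $U_A$ to $\ket{\psi}_{AB}$ and measuring overlap with the fixed state $\ket{\Phi^+}$ is the same as keeping $\ket{\psi}_{AB}$ fixed and ranging the overlap over the whole family $\operatorname{MAXE}$; concretely $F(U_A\otimes I_B\ket{\psi},\ket{\Phi^+})=F(\ket{\psi},U_A^\dagger\otimes I_B\ket{\Phi^+})$, and as $U_A$ ranges over $\mathcal U_d$ so does the state $U_A^\dagger\otimes I_B\ket{\Phi^+}$ over $\operatorname{MAXE}$. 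Taking the maximum of both sides yields the middle equality, and this step is precisely what justifies the claim in the text that reducing one side of the PQC loses nothing.

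Next I would prove the final equality $\max_{U_A}F(\ket{\widetilde\psi}_{AB},\ket{\Phi^+}_{AB})=\tfrac{1}{\sqrt d}\sum_j c_j$. Setting $X=\operatorname{mat}(\ket{\widetilde\psi}_{AB})$ and $Y=\operatorname{mat}(\ket{\Phi^+}_{AB})=\tfrac{1}{\sqrt d}I$, the fidelity becomes $|\tr X^\dagger Y|=\tfrac{1}{\sqrt d}|\tr X^\dagger|$. The singular values of $X$ are the Schmidt coefficients of $\ket{\widetilde\psi}_{AB}$, which coincide with those of $\ket{\psi}_{AB}$ since $U_A\otimes I_B$ is local; hence Von Neumann's trace inequality gives $|\tr X^\dagger|\le\sum_j c_j$, matching the $p_j=1/\sqrt d$ instance of Eq.~\eqref{eq:orderedsum}. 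For the upper bound to be attained I would take $U_A$ mapping the Schmidt basis $\{\ket{u_j}_A\}$ to the computational basis $\{\ket{j}_A\}$, which makes $X^\dagger$ positive semidefinite (up to the fixed $B$-basis), saturating the inequality.

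The main obstacle is the degeneracy issue: because every $p_j$ equals $1/\sqrt d$, the strict-ordering hypothesis of Theorem~\ref{thm:real} is maximally violated, so none of the Schmidt-vector readout conclusions (Eqs.~\eqref{eq:valuereadout}, \eqref{eq:vectorprepare}) survive — indeed the optimal $U_A$ is far from unique. The corollary wisely asserts only the value $\tfrac{1}{\sqrt d}\sum_j c_j$ and not any basis recovery, so I must be careful to invoke only the inequality-and-saturation part of the argument and not the uniqueness-of-optimizer part. I expect the proof itself to be short, but the care required is in flagging that the equal-weight case still admits a saturating optimizer even though the optimizer is no longer essentially unique, and in verifying that the transpose trick genuinely identifies the two maximization domains rather than merely relating them by an inequality.
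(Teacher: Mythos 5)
Your proposal follows essentially the same route as the paper's proof: von Neumann's trace inequality through the $\operatorname{vec}/\operatorname{mat}$ correspondence for the upper bound, and the transpose trick (via the characterization $\operatorname{MAXE}=\{U_A\otimes I_B\ket{\Phi^+}\}$) to identify the one-sided maximization over $U_A$ with the maximization over $\operatorname{MAXE}$. Your middle-equality argument is correct, and your observation that Theorem~\ref{thm:real} cannot be quoted verbatim --- its hypothesis demands \emph{strictly} decreasing $p_j$, whereas only the inequality \eqref{eq:orderedsum} and an explicit saturating choice are actually needed --- is a legitimate refinement of the paper's more casual ``let $p_j=1/\sqrt{d}$''.

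However, your saturation step fails as written. With $V_B=I_B$ fixed, taking $U_A$ to map $\{\ket{u_j}_A\}$ to $\{\ket{j}_A\}$ gives $\ket{\widetilde{\psi}}_{AB}=\sum_j c_j\ket{j}_A\ket{v_j}_B$, and hence
\begin{align*}
F\bigl(\ket{\widetilde{\psi}}_{AB},\ket{\Phi^+}_{AB}\bigr)=\frac{1}{\sqrt{d}}\Bigl|\sum_j c_j\braket{j|v_j}\Bigr|,
\end{align*}
which equals $\sum_j c_j/\sqrt{d}$ only when every $\ket{v_j}$ coincides with $\ket{j}$ up to one common phase; for generic Schmidt bases of the $B$ system it is strictly smaller. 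Equivalently, $X=\operatorname{mat}(\ket{\widetilde{\psi}}_{AB})=\sum_j c_j\ket{j}\!\bra{v_j^*}$ is \emph{not} positive semidefinite, and the hedge ``up to the fixed $B$-basis'' does not repair this, since the trace is not invariant under a one-sided change of basis. The bound is genuinely attained only if $U_A$ compensates for the $B$-side basis: the correct one-sided optimizer is $U_A\ket{u_j}=\ket{v_j^*}$, the componentwise conjugate of $\ket{v_j}$, which makes $X=\sum_j c_j\proj{v_j^*}$ positive semidefinite. Alternatively --- and this is the cleanest repair, available inside your own proof --- use the middle equality you have already established and saturate on the $\operatorname{MAXE}$ side by choosing $\ket{\Phi}=\frac{1}{\sqrt{d}}\sum_j\ket{u_j}_A\ket{v_j}_B\in\operatorname{MAXE}$, whose overlap with $\ket{\psi}_{AB}$ is exactly $\sum_j c_j/\sqrt{d}$. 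The paper sidesteps the issue in yet another way: it proves achievability at the \emph{two-sided} level, where mapping both Schmidt bases to the computational basis does saturate, and only afterwards collapses $V_B$ into $U_A$ with the transpose trick \eqref{transpose}.
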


\begin{proof}
	In Eq. \eqref{eq:real}, let $p_j=1/\sqrt{d}$, we have
	\begin{align*}
		F(U_A\otimes V_B\ket{{\psi}}_{AB},\ket{\Phi}_{AB})\le\sum_jc_jp_j=\frac{\sum_jc_j}{\sqrt{d}}.
	\end{align*}
	Thus the maximal fidelity will not exceed $\frac{\sum_jc_j}{\sqrt{d}}$. Evidently, when $U_A$ maps $\{\ket{u_j}_A\}$ to $\{\ket{j}_A\}$ and $V_B$ maps $\{\ket{v_j}_B\}$ to $\{\ket{j}_B\}$, the maximal is reached. 
	
	Now note that
	\begin{align*}
		&\max_{U_A,V_B}F(U_A\otimes V_B\ket{{\psi}}_{AB},\ket{\Phi^+}_{AB})\nonumber\\
		=&\max_{U_A,V_B}|\bra{{\psi}}_{AB}U_A^{\dagger}\otimes V_B^{\dagger}\ket{\Phi^+}_{AB}|\nonumber\\
		=&\max_{U_A,V_B}|\bra{{\psi}}_{AB}U_A^{\dagger}V_B^*\otimes I_B\ket{\Phi^+}_{AB}|\\
		=&\max_{U_A}F(U_A\otimes I_B\ket{{\psi}}_{AB},\ket{\Phi^+}_{AB})\nonumber\\
		=&\max_{\ket{\Phi}\in\operatorname{MAXE}}F(\ket{\psi}_{AB},\ket{\Phi}_{AB})\nonumber,
	\end{align*}
	where the second equation follows from the transpose trick (Eq. \eqref{transpose}). Then Eq. \eqref{eq:norm} holds, completing the proof.
\end{proof}

(A Similar result is also discussed in Ref. \cite{Zhao_2010}.) Corollary \ref{cor:norm} gives an estimation of the $L1$ norm (the sum) of the Schmidt coefficients, reflecting entanglement properties of the state. For example, note that the logarithm negativity of $\ket{\psi}_{AB}$ can be written as \cite{Vidal2002Computable}
\begin{align*}
E_N(\ket{\psi}_{AB})=\log_2(2N+1)=\log_2(\sum_jc_j)^2,
\end{align*}
where $N(\ket{\psi}_{AB})=1/2[(\sum_jc_j)^2-1]$ is the negativity of $\ket{\psi}_{AB}$.  Specifically, since Eq. \eqref{eq:norm} is an optimization over unitaries, a variational quantum algorithm implements the optimization via tuning parameterized quantum circuits on near-term devices. The fidelity in Eq. \eqref{eq:norm} then is obtained by measurement in the computational basis after the subcircuit whose inverse prepares $\ket{\Phi^+}$ from $\ket{0^n}$. In this way, entanglement measures such as logarithm negativity can be inferred without tomography of the state. (In Section \ref{sec:algo} more details on the variational quantum algorithm will be provided.) Notably, Corollary \ref{cor:norm} enables us to employ a one-side parameterized quantum circuit (on either the $A$ or $B$ system) in the variational learning. This reduces the parameters required in the parameterized quantum circuit by at least half, saving an appreciable amount of computational resources in classical optimization.

\subsection{Entanglement detection for the general bipartite state}

For pure states, entanglement can be easily detected by measuring the purity of the reduced quantum state. However, for mixed states, detecting entanglement becomes more challenging. While the purity and some related entropies (like Tsallis-2 entropy $1-\tr[\rho^2]$ and second-order Renyi entropy $-\log\tr[\rho^2]$) can be computed efficiently, estimating entropies in general is difficult. Moreover, purity and these specific entropies typically do not provide much operational information, such as distillability of the state. Wang \etal \cite{wang2022detecting}, realized the positive map criteria by decomposing the maps into Pauli operators, but this decomposition can consume exponentially many resources in multiqubit cases.

In the above logarithm negativity estimation for pure states we considered the maximal fidelity between a pure state and the maximally entangled states. It should not be surprising that this maximal fidelity also reflects some entanglement properties for general mixed states. In fact, the maximal overlap between a mixed state and the family of maximally entangled states is called the \emph{fully entangled fraction} \cite{Zhao_2010}.

\begin{definition}
The \emph{fully entangled fraction} $\chi$ of any bipartite state $\rho_{AB}$ is defined as the maximal state overlap to $\operatorname{MAXE}$:
\begin{align*}
    \chi(\rho_{AB}):=&\max_{\ket{\Phi}\in\operatorname{MAXE}}F^2(\rho_{AB},\ket{\Phi}_{AB})\nonumber\\
    =&\max_{\ket{\Phi}\in\operatorname{MAXE}}\tr\left[\rho_{AB}\proj{\Phi}\right]\nonumber\\
    =&\max_{U_A\in\mathcal{U}_d}\tr\left[\rho_{AB}U_A\otimes I_B\proj{\Phi^+}U_A^\dagger\otimes I_B\right],
\end{align*}
with $\operatorname{MAXE}$ the set of all maximally entangled states. 

A \emph{$\chi$ state} is defined as the set of all bipartite states $\rho_{AB}$ satisfying $\chi(\rho_{AB})\le1/d$, where $d$ is the local dimension.
\end{definition}

The fully entangled fraction $\chi$ was previously studied in the qubit-qubit case in Refs. \cite{Local2000Badzia,Fidelity2002Verstraete}, and more generally, in Refs.  \cite{Zhao_2010,Ganguly2011entanglement,vidal2000approximate}. Specifically, Ganguly \etal \cite{Ganguly2011entanglement} showed that the $\chi$-state set is convex and compact.

We point out that $\chi\le1/d$ is a valid separability criterion: recall that the $\mathcal{R}$ state is the set of states that remain positive under the partial $\mathcal{R}$ map. We illustrate that the $\chi$ state contains the $\mathcal{R}$ state.
\begin{lemma}
\label{lemma:containingRstate}
The $\mathcal{R}$ state is a subset of the $\chi$ state.
\end{lemma}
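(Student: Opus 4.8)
The plan is to show $\mathcal{R}\text{-state} \subseteq \chi\text{-state}$ by proving the contrapositive at the level of the defining inequalities: if a state $\rho_{AB}$ violates the $\chi$ criterion, i.e. $\chi(\rho_{AB}) > 1/d$, then it also violates the reduction criterion, i.e. $(\mathcal{R}_A \otimes \operatorname{id}_B)(\rho_{AB}) \not\ge 0$. Equivalently, I would directly show that any $\rho_{AB}$ lying in the $\mathcal{R}$ state (so that $I_A \otimes \rho_B - \rho_{AB} \ge 0$) must satisfy $\chi(\rho_{AB}) \le 1/d$.

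First I would unpack the fully entangled fraction using the third expression in the Definition, writing $\chi(\rho_{AB}) = \max_{U_A} \tr[\rho_{AB}\, (U_A \otimes I_B)\proj{\Phi^+}(U_A^\dagger \otimes I_B)]$. The key technical move is to test the reduction-positivity of $\rho_{AB}$ against a cleverly chosen witness built from the optimal maximally entangled state. Concretely, let $\ket{\Phi} = U_A \otimes I_B \ket{\Phi^+}$ be the maximizer, so $\chi(\rho_{AB}) = \bra{\Phi}\rho_{AB}\ket{\Phi}$. Since $\rho_{AB}$ is an $\mathcal{R}$ state, positivity of $I_A \otimes \rho_B - \rho_{AB}$ implies $\bra{\Phi}(I_A \otimes \rho_B - \rho_{AB})\ket{\Phi} \ge 0$, that is,
\begin{align*}
\chi(\rho_{AB}) = \bra{\Phi}\rho_{AB}\ket{\Phi} \le \bra{\Phi}(I_A \otimes \rho_B)\ket{\Phi}.
\end{align*}
The remaining step is to evaluate the right-hand side and show it equals $1/d$. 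Because $\ket{\Phi}$ is maximally entangled, its reduced state on $B$ is maximally mixed, $\operatorname{Tr}_A \proj{\Phi} = I_B/d$; hence $\bra{\Phi}(I_A \otimes \rho_B)\ket{\Phi} = \tr[(I_B/d)\,\rho_B] = (1/d)\tr \rho_B = 1/d$, using $\tr\rho_B = \tr\rho_{AB} = 1$. Combining the two displays gives $\chi(\rho_{AB}) \le 1/d$, which is exactly the $\chi$-state condition.

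The conceptually delicate point I expect to watch most carefully is the interface between the maximization in $\chi$ and the single-witness positivity test. It suffices to bound $\bra{\Phi}\rho_{AB}\ket{\Phi}$ for the particular optimal $\ket{\Phi}$, since the chain of inequalities holds for that $\ket{\Phi}$ and the left-hand side is by definition the maximum; I do not need uniformity over all $\ket{\Phi}$, only that the bound applies to the maximizer. The main obstacle is therefore not the maximization but verifying cleanly that the reduced state of every maximally entangled state on either subsystem is maximally mixed, which follows immediately from the Schmidt form $\ket{\Phi} = \tfrac{1}{\sqrt d}\sum_j \ket{u_j}_A\ket{v_j}_B$ with $\{\ket{v_j}_B\}$ orthonormal, giving $\operatorname{Tr}_A\proj{\Phi} = \tfrac1d\sum_j \proj{v_j}_B = I_B/d$. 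With that identity in hand the whole argument reduces to the two-line inequality chain above, and I would present it in that order: expand $\chi$, apply reduction-positivity against the optimal witness, and evaluate the marginal to the value $1/d$.
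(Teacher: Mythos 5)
Your proof is correct and takes essentially the same approach as the paper: both arguments test the reduction-map positivity $I_A\otimes\rho_B-\rho_{AB}\ge 0$ against maximally entangled projectors and use the fact that $\operatorname{Tr}_A\proj{\Phi}=I_B/d$ to obtain the bound $1/d$. The only difference is presentational — the paper pushes the map $U_A\circ\mathcal{R}_A$ onto $\proj{\Phi^+}$ via adjointness of $\mathcal{R}$ (a Choi-operator/Heisenberg-picture phrasing), whereas you evaluate $\bra{\Phi}(I_A\otimes\rho_B-\rho_{AB})\ket{\Phi}\ge 0$ directly, which is the same computation in a slightly more elementary form.
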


\begin{proof}
    By definition, we need to show that $\chi(\rho_{AB})\le1/d$ for any $\rho_{AB}\in\mathcal{R}$ state.
    
    Given any $\mathcal{R}$ state $\rho_{AB}$, it holds that
    \begin{align*}
        &\mathcal{R}_A\otimes\operatorname{id}_B(\rho_{AB})\ge0\\
        \Leftrightarrow& U_A\circ\mathcal{R}_A\otimes\operatorname{id}_B(\rho_{AB})\ge0\\
        \Rightarrow& d\tr[U_A\circ\mathcal{R}_A\otimes\operatorname{id}_B(\rho_{AB})\proj{{\Phi^+}}]\ge0,
    \end{align*}
    for any unitary map $U_A$. The right-hand side satisfies
    \begin{align*}
        &d\tr[U_A\circ\mathcal{R}_A\otimes\operatorname{id}_B(\rho_{AB})\proj{{\Phi^+}}]\nonumber\\
        =&d\tr[\rho_{AB}\mathcal{R}_A^\dagger\circ U_A^\dagger\otimes\operatorname{id}_B(\proj{{\Phi^+}})]\nonumber\\
        =&d\tr[\rho_{AB}\mathcal{R}_A\circ U_A^\dagger\otimes\operatorname{id}_B(\proj{{\Phi^+}})]\nonumber\\
        =&1-\tr\left[\rho_{AB}U_A^\dagger\otimes I_B\proj{\Phi^+}U_A\otimes I_B\right]\ge0,
    \end{align*}
    where in the first equivalence $\mathcal{R}_A^\dagger$ is the adjoint map of $\mathcal{R}_A$, and in the second equivalence we used $\mathcal{R}_A^\dagger=\mathcal{R}_A$. Then
    \begin{align*}
        \forall U_A,\tr\left[\rho_{AB}U_A\otimes I_B\proj{\Phi^+}U_A^\dagger\otimes I_B\right]\le1/d,
    \end{align*}
    leading to
    \begin{align*}
        \chi(\rho_{AB})=&\max_{U_A\in\mathcal{U}_d}\tr\left[\rho_{AB}U_A\otimes I_B\proj{\Phi^+}U_A^\dagger\otimes I_B\right]\\
        \le&1/d,
    \end{align*}
    which completes the proof.
\end{proof}

With Lemma \ref{lemma:containingRstate} one can think of $\chi$ as a separability criterion that is no stronger than the reduction criterion. That is, all states with fully entangled fraction $\chi(\rho)>1/d$ will also violate the reduction criterion. It is known that states with fully entangled fraction $\chi(\rho)>1/d$ can be distilled by the protocol consisting of twirling and the generalized XOR operation \cite{horodecki1998reduction}. Therefore, our criterion determines distillability for this protocol.

Geometrically, the $\mathcal{R}$ state is a convex set that contains SEP and PPT. Consider the map $U_A\circ\mathcal{R}_A\otimes\operatorname{id}_B$ for some fixed unitary $U_A$, and let the $\Tilde{\mathcal{R}}$ state be the set of states that is positive under this map. Then the $\Tilde{\mathcal{R}}$ state is equal to the ${\mathcal{R}}$ state since unitary operator preserves positivity. Then $\tr[W\rho]=0$ corresponds to the hyperplane `outside' of the $\Tilde{\mathcal{R}}$ state, where $W=\mathcal{J}(U_A\circ\mathcal{R}_A\otimes\operatorname{id}_B)=\mathcal{R}_A\circ U_A^\dagger\otimes\operatorname{id}_B(\proj{{\Phi^+}})$ is the Choi operator of the partial $\Tilde{R}$ map. Take $U_A$ over the unitary group; those hyperplanes envelop a convex set, which is essentially the $\chi$ state by our definition (see Fig. \ref{fig:geometry}).

\begin{figure*}[t]
	\centering
	\includegraphics[width=0.6\textwidth]{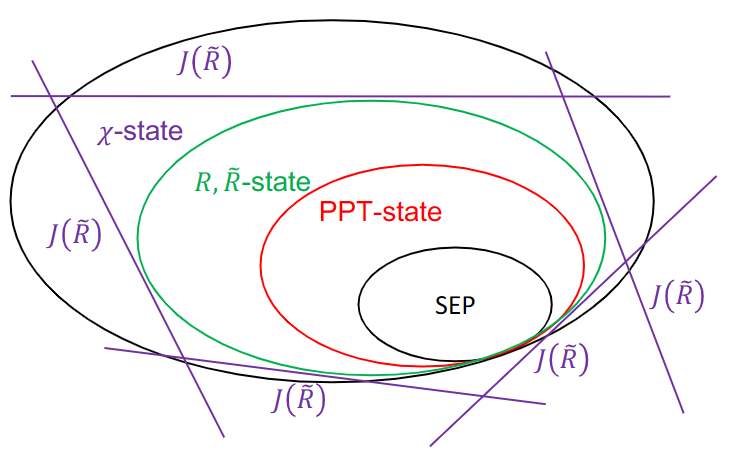}
	\caption{The geometrical explanation of the $\chi$ state. The black, red, and green circles correspond to the separable, PPT, and $\mathcal{R}$ state, respectively. The purple lines are hyperplanes defined by $\tr[\mathcal{J}(\tilde{R})\rho]=0$, where $\mathcal{J}(\tilde{R})$ is the Choi operator of the local $\tilde{\mathcal{R}}$ map. Note that each line corresponds to a different $\tilde{\mathcal{R}}$ map defined by different $U_A$.}
	\label{fig:geometry}
\end{figure*}

We study the value of $\chi$ for certain families of entangled mixed states that are of practical interest. We have the following result.

\begin{proposition}
The $\chi(\rho)>1$ criterion can detect entanglement on the following families of states
\begin{itemize}
    \item isotropic states: $\rho=p\proj{\Phi^+}+(1-p)I/d^2$, $p\in[\-\frac{1}{d^2-1},1]$;
    \item S states: $\rho=p\proj{\Phi^+}+(1-p)\proj{00}$, $p\in[0,1]$;
    \item Werner states ($2\times2$): $\rho=\frac{1}{d^3-d}((d-\alpha)I+(1-d\alpha)F)$, where $d=2$, $\alpha\in[-1,1]$, and $F$ is the flip operator;
    \item Bit-and-phase-flipped Bell states $\rho=p\rho_1+(1-p)X\otimes I\rho_1X\otimes I$, $\rho_1=q\proj{\Phi^+_2}+(1-q)Z\otimes I\proj{\Phi^+_2}Z\otimes I$.
\end{itemize}
\label{prop:4cases}
\end{proposition}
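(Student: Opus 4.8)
The plan is to compute the fully entangled fraction $\chi(\rho)$ in closed form for each of the four families and compare it against the threshold $1/d$ of the separability criterion established in Lemma \ref{lemma:containingRstate} (here $d=2$ for the last three families; I read the stated ``$\chi(\rho)>1$'' as the criterion $\chi(\rho)>1/d$, since $\chi\le 1$ always). Throughout I would parametrize the maximally entangled states as $\ket{\Phi_U}=(U_A\otimes I_B)\ket{\Phi^+}$ and reduce every computation to two elementary identities obtained from the transpose trick (Eq. \eqref{transpose}): the overlap $|\braket{\Phi^+|\Phi_U}|^2=\frac{1}{d^2}|\tr U_A|^2$, and more generally $\bra{\Phi^+}(A\otimes B)\ket{\Phi^+}=\frac{1}{d}\tr[A^\intercal B]$, together with the flip-invariance $F\ket{\Phi^+}=\ket{\Phi^+}$.

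For the isotropic and S states the optimization over $\operatorname{MAXE}$ collapses because $U_A=I$ simultaneously maximizes every term. For the isotropic state, $\tr[(I/d^2)\proj{\Phi_U}]=1/d^2$ is constant while $\frac{1}{d^2}|\tr U_A|^2\le 1$ is saturated at $U_A=I$; hence $\chi=p+(1-p)/d^2$ for $p\ge 0$, and $\chi>1/d$ reduces to $p>1/(d+1)$, which I would check coincides with the entanglement boundary via the partial transpose. For the S state I would compute $\tr[\proj{00}\proj{\Phi_U}]=\frac{1}{d}|(U_A)_{00}|^2\le 1/d$, again saturated at $U_A=I$, giving $\chi=p+(1-p)/d$ and the threshold $p>0$, matching the NPT region (the partial transpose of the $2\times 2$ S state carries an eigenvalue $-p/2$).

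For the Werner state I would exploit $U\otimes U$-invariance: only $s:=\tr[F\proj{\Phi_U}]=\frac{1}{d}\tr[\overline{U_A}U_A]$ enters, and $s$ sweeps all of $[-1,1]$ as $U_A$ ranges over $\mathcal{U}_d$ (the symmetric and antisymmetric maximally entangled states realize $s=\pm 1$), so the problem becomes a one-variable linear optimization; maximizing over $s\in[-1,1]$ and comparing with $1/d$ isolates the entangled region $\alpha<0$ (I note the coefficient of $F$ in the stated formula must read $(d\alpha-1)$ rather than $(1-d\alpha)$ for $\rho$ to have unit trace). For the bit-and-phase-flipped family I would first expand it in the Bell basis, finding it is Bell-diagonal with weights $(pq,\,p(1-q),\,(1-p)q,\,(1-p)(1-q))$; since the four Bell states form an orthonormal basis of $\operatorname{MAXE}$, every $\ket{\Phi_U}$ gives $\tr[\rho\proj{\Phi_U}]=\sum_i\lambda_i|\braket{B_i|\Phi_U}|^2$, a convex combination of the weights, so $\chi=\max_i\lambda_i$ and $\chi>1/2$ is exactly the Bell-diagonal entanglement condition.

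The main obstacle is the optimization over the entire continuous family $\operatorname{MAXE}$; the symmetry of each state is what makes it tractable, so the real work is (i) justifying that $U_A=I$ (respectively the extremal $s$, respectively a Bell state) is \emph{globally} optimal rather than merely a critical point, and (ii) verifying in each case that the resulting parameter threshold coincides with the true separability boundary, which I would confirm by an independent partial-transpose computation. Everything else is routine algebra.
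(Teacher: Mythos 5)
Your proposal is correct, and on the first two families it is essentially the paper's own proof: by linearity, $U_A=I$ simultaneously maximizes both terms, giving $\chi=p+(1-p)/d^2$ for isotropic states and $\chi=p+(1-p)/d$ for S states, after which one compares with the known separability boundaries. On the last two families you genuinely diverge, in ways worth recording. For the $2\times2$ Werner states the paper never computes $\chi$ directly: it invokes the local-unitary equivalence of $2\times2$ Werner and isotropic states together with local-unitary invariance of $\chi$ (Property \ref{property:LOCC}); your reduction to the single parameter $s=\tr\bigl[F\,(U_A\otimes I)\proj{\Phi^+}(U_A^\dagger\otimes I)\bigr]\in[-1,1]$ followed by a linear optimization is more self-contained, and your correction of the coefficient $(1-d\alpha)$ to $(d\alpha-1)$ fixes a real typo in the statement. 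For the bit-and-phase-flipped Bell states the divergence matters even more: the paper's Euler-angle calculation reports $\chi=1/2+|(1-2p)(1-2q)|/2$, but this formula is incorrect. Your Bell-basis argument gives the right value: the state is Bell-diagonal with weights $pq,\ p(1-q),\ (1-p)q,\ (1-p)(1-q)$; Parseval gives $\tr[\rho\proj{\Phi}]\le\max_i\lambda_i$ for every maximally entangled $\ket{\Phi}$, with equality at the dominant Bell state, so $\chi=\max(p,1-p)\max(q,1-q)$. For $p=0.9$, $q=0.8$ this equals $0.72$, while the paper's formula gives $0.74$; worse, the paper's formula would certify ``entanglement'' of separable states such as $p=q=0.6$ (where $\max_i\lambda_i=0.36$ but the formula yields $0.52>1/2$), which would even contradict Lemma \ref{lemma:containingRstate}. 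The paper's side claim that these states are ``entangled unless $p=q=1/2$'' is likewise inaccurate (at $p=1/2$ the state is separable for every $q$); the correct boundary is $\max(p,1-p)\max(q,1-q)>1/2$, the standard Bell-diagonal criterion, which matches your $\chi$ exactly — so the proposition's conclusion survives via your argument rather than the paper's. The steps you defer (PPT checks of the boundaries, global optimality of $U_A=I$) are routine and are the same ones the paper cites or leaves implicit; the only cosmetic slip is calling the four Bell states an ``orthonormal basis of $\operatorname{MAXE}$'' — they are an orthonormal basis of the Hilbert space, each element of which happens to lie in $\operatorname{MAXE}$.
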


\begin{proof}
    We calculate $\chi$ for each case individually.
    \begin{itemize}
        \item For isotropic states, 
        $$
        \chi(\rho)=p\chi(\proj{\Phi^+})+(1-p)/d^2=p+(1-p)/d^2.
        $$
        So $\chi(\rho)>1/d$ if and only if $p>\frac{1}{d+1}$. On the other hand, it is known \cite{horodecki1998reduction} that isotropic states are entangled if and only if $p>\frac{1}{d+1}$, and the reduction criterion detects all entangled isotropic states. So $\chi$ also detects all entangled isotropic states.
        \item For S states,
        $$
        \chi(\rho)\le p\chi(\proj{\Phi^+})+(1-p)\chi(\proj{00})=p+(1-p)/d.
        $$
        This bound is achievable by taking $U=I$. So $\chi(\rho)=p+(1-p)/d$. For all $p>0$, $\chi(\rho)>p/d+(1-p)/d=1/d$. On the other hand, all $S$ states are entangled unless $p=0$, as they all violate the PPT criterion.
        \item For Werner states, for $d>2$, it is known that the reduction criterion does not detect any entanglement for Werner states \cite{horodecki1998reduction}. So the best hope for is the 2-dimensional case. For $d=2$, Werner states are equivalent to isotropic states up to local unitary operator, and $\chi$ is local-unitary invariant. So $\chi$ detects all entangled 2-dimensional Werner states.
        \item For bit-and-phase-flipped Bell states, decompose the optimizing unitary operator as $U=R_z(z_1/2)R_y(y/2)R_z(z_2/2)$; then direct calculation shows that
        \begin{align*}
        \chi(\rho)=&\max_{z_1,y,z_2}\cos^2(y)(1+(1-2p)(1-2q)\cos(z_1+z_2))/2\\
        =&1/2+|(1-2p)(1-2q)|/2\ge1/2.
        \end{align*}
        It is known that these states are entangled unless $p=q=1/2$. So $\chi$ detects all such entangled states.
    \end{itemize}
    This completes the proof.
\end{proof}

Proposition \ref{prop:4cases} shows that our criterion detects entanglement for states that are relevant in practical applications.

One might also hope that $\chi$ forms some entanglement measure \cite{bengtsson2017geometry}. Unfortunately, there are counterexamples against it.

\begin{property}
\label{property:LOCC}
It holds that $\chi$ is local-unitary invariant, but not local operation and classical communication (LOCC) nonincreasing.
\end{property}

\begin{proof}
It is evident that $\chi$ is invariant under local unitary operator since the optimization is over all local unitaries.

We show that $\chi$ is not LOCC nonincreasing by contradiction. Suppose that $\chi$ is LOCC nonincreasing, then $\chi$ must be constant on separable states since separable states are mutually LOCC convertible. Then consider $\rho_0=I_{d^2}/d^2$ the maximally mixed state and $\rho_1=\proj{00}$ the pure separable state, we have
\begin{align*}
    \chi(\rho_0)=\frac{1}{d^2}\tr[\proj{\Phi^+}]=\frac{1}{d^2},
\end{align*}
while $\chi(\rho_1)=(2N(\ket{00})+1)/d=1/d\neq\chi(\rho_0)$ from Property \ref{cor:norm}. This leads to a contradiction, so $\chi$ is not LOCC nonincreasing.
\end{proof}

Also, an example of the $\chi$ state not equaling the R state is provided in Appendix \ref{sec:SM_apnoisy}.

\section{Variational Quantum Algorithms}\label{sec:algo}
With the theoretical framework in hand, we develop the variational quantum algorithms for Schmidt decomposition, logarithm negativity estimation, and entanglement detection. Here, the variational quantum algorithm (VQA) \cite{Cerezo2020,Bharti2021,Endo2020} is a popular paradigm for near-term quantum applications, which uses a classical optimizer to train parameterized quantum circuits to achieve certain tasks. VQA is applied to solve problems in many areas, including ground and excited states preparations \cite{Cao2019,Nakanishi2019,Higgott2018}, quantum data compression \cite{Romero2017,Cao2020,Wang2020d}, combinatorial optimization \cite{Farhi2014}, quantum classificationr~\cite{Li2021-classifier,Schuld2018,Li2021}, and quantum metrology \cite{Meyer2020,Koczor2019a}

In this section, we present the details of cost function evaluation and optimization methods in the algorithm design. The diagram of our algorithm is shown in Fig. \ref{fig:diagram} and the algorithm boxes are given in Appendix \ref{sec:SM_algo}.

\begin{figure*}[t]
	\centering
	\includegraphics[width=0.9\textwidth]{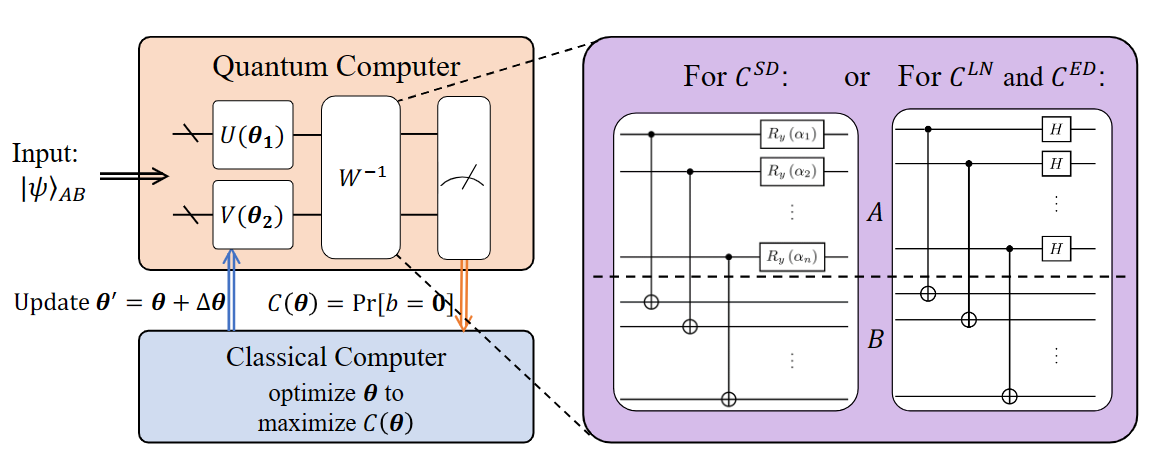}
	\caption{Diagram of the variational quantum algorithms for Schmidt decomposition and logarithm negativity. Apply parameterized quantum circuits $U(\bm{\theta_1})$ and $V(\bm{\theta_2})$ onto the two parts $A$ and $B$ respectively [for logarithm negativity estimation and entanglement detection $I_B$ can replace $V(\bm{\theta_2})$]. Then operate an inverse of the state-preparing circuit ($W$ for Schmidt decomposition and $W'$ for logarithm negativity estimation and entanglement detection). After measurement, we take the probability of the all-zero measurement outcome as the cost function, and the classical optimizer is used to maximize the cost function by updating the parameters in the PQC iteratively.}
	\label{fig:diagram}
\end{figure*}

\subsection{Cost functions}\label{subsec:costfunction}

We first look at the cost function in the Schmidt decomposition task. Recall that $\max F(\ket{\widetilde{\psi}}_{AB},\ket{\Psi}_{AB})={\sum_jp_jc_j}$ according to Theorem \ref{thm:real}. In the variational algorithm, we set our cost function to be:
\begin{equation}
    C^{SD}:=F^2(\ket{\widetilde{\psi}}_{AB},\ket{\Psi}_{AB})=\operatorname{Tr}\left[\widetilde{\rho}_{AB}\ket{\Psi}\!\bra{\Psi}_{AB}\right].
\label{eq:C^SD}
\end{equation}

Recall that $\widetilde{\rho}_{AB}=\ket{\widetilde{\psi}}\!\bra{\widetilde{\psi}}_{AB}$ and $\ket{\widetilde{\psi}}_{AB}=U_A\otimes V_B\ket{\psi}_{AB}$. Evidently, $F$ reaches its maximal if and only if $C^{SD}$ reaches the maximal. Thus, after the cost function $C$ is optimized the Schmidt coefficients can be read out through measurement:
\begin{align*}
    c_j=F(\ket{\widetilde{\psi}}_{A},\ket{j}_{A})=\sqrt{\operatorname{Tr}\left[\widetilde{\rho}_{A}\ket{j}\!\bra{j}_{A}\right]}.
\end{align*}

We then show that the cost function can be evaluated efficiently on a NISQ device. Suppose that circuit $W$ prepares the state $\ket{\Psi}_{AB}=W\ket{0}_{AB}$. By the cyclic property of trace we have
\begin{align}
    C^{SD}=&\operatorname{Tr}[\widetilde{\rho}_{AB}W\ket{0}\!\bra{0}_{AB}W^\dagger]\nonumber\\
    =&\operatorname{Tr}[W^\dagger\widetilde{\rho}_{AB}W\ket{0}\!\bra{0}_{AB}].
\label{eq:C^SD_W}
\end{align}

Thus, the cost function $C^{SD}$ is equal to the probability of all-zero measurement outcomes of $\ket{\widetilde{\psi}}_{AB}$ acted on by $W^\dagger=W^{-1}$. So the key to evaluating $C$ is to implement $W$ ($W^{-1}$) efficiently. 

Note that $W$ can be constructed by $W_A$ which prepares the superposition $\sum_jp_j\ket{j}_A$ in system $A$ and $n$ CNOT gates connecting all qubit pairs across systems $A$ and $B$. If we preset the coefficients $p_j$ then constructing the circuit for $W_A$ is essentially the amplitude encoding \cite{schuld2021supervised} task, for which the best-known construction\cite{Plesch_2011} requires $O(2^n)$ CNOT count and $O(2^n)$ depth. Nevertheless, in our algorithm, we only require coefficients $\{p_j\}$ to be decreasing. We provide a circuit construction for this `weak' state preparation task that uses $n$ single-qubit gates in parallel. In fact, the circuit composed of $n$ paralleled $y$-axis rotation gates
\[ \Qcircuit @C=1em @R=0.7em {
	&\lstick{\ket{0}}&\gate{R_y(\alpha_{1})}&\qw\\
	&\lstick{\ket{0}}&\gate{R_y(\alpha_{2})}&\qw&\rstick{\sum_jp_j\ket{j}_A}\\
	&...\\
	&\lstick{\ket{0}}&\gate{R_y(\alpha_{n})}&\qw\\
	{\gategroup{1}{4}{4}{4}{1.75em}{\}}}
}
\]
with carefully chosen parameters $\{\alpha_j\}$, satisfies the condition that $p_j$ decreases for arbitrary $n$. When $n$ trivially equals $1$, the parameter $\alpha_1$ can be set to $0$. We refer the reader to Appendix \ref{sec:SM_alpha} for the selection of $\alpha_j$ for $n\ge2$. Note that the total depth in this construction of $W$ is 2 with CNOT count $n$, leading to an efficient implementation of $W$ (thus $W^{-1}$) in the cost function evaluation.

We next turn to the logarithm negativity estimation. In this case, since $\max F(\ket{\widetilde{\psi}}_{AB},\ket{\Phi^+}_{AB})=\sum_jc_j/\sqrt{d}$ according to Corollary \ref{cor:norm}, the cost function is set as:
\begin{align*}
    C^{LN}:=F^2(\ket{\widetilde{\psi}}_{AB},\ket{\Phi^+}_{AB})=\operatorname{Tr}\left[\widetilde{\rho}_{AB}\proj{\Phi^+}_{AB}\right].
\end{align*}

Evidently, $C^{LN}_{\max}=(\sum_jc_j)^2/d$. We use the variational method to maximize the cost function $C^{LN}$, and then the logarithm negativity is estimated as
\begin{align}\label{eq:LN_cal}
E_N(\ket{\psi}_{AB})=\log_2(C^{LN}_{\max}d)=\log_2(C^{LN}_{\max})+n.
\end{align}

The cost function $C^{LN}$ also has an efficient evaluation using a similar technique to that discussed above. Note that the maximally entangled state $\ket{\Phi}_{AB}$ can be prepared by a depth-2 circuit $W'$ consisting of $n$ Hadamard gates and $n$ CNOT gates. With the inverse circuit $W'^\dagger$ in hand, the cost function $C^{LN}$ is evaluated as
\begin{align}
    C^{LN}=&\operatorname{Tr}\left[\widetilde{\rho}_{AB}\proj{\Phi^+}_{AB}\right]\nonumber\\
    =&\operatorname{Tr}\left[W'^\dagger\widetilde{\rho}_{AB}W'\ket{0}\!\bra{0}_{AB}\right],
        \label{eq:LN_loss_func}
    \end{align}

which is the probability of the all-zero measurement outcome of $\ket{\widetilde{\psi}}_{AB}$ acted on by $W'^\dagger$.

Finally, for the entanglement detection algorithm, we use the same cost function as we used when estimating logarithm negativity:
    \begin{align}
        C^{ED}=C^{LN}=&\operatorname{Tr}\left[\widetilde{\rho}_{AB}\proj{\Phi^+}_{AB}\right]\nonumber\\
        =&\operatorname{Tr}\left[W'^\dagger\widetilde{\rho}_{AB}W'\ket{0}\!\bra{0}_{AB}\right]
        \label{eq:ED_loss_func}
    \end{align}

Thus, the depth-2 circuit $W'$ is also employed. We remark that in entanglement detection we may not need the cost function to converge; the algorithm can halt when the cost function $C^{ED}\ge1$ and the output is a positive detection result.

\subsection{Parameterized quantum circuit}

As discussed in Section \ref{sec:theoframe}, we adopt the bilocal PQC $U(\bm{\theta}_1)\otimes V(\bm{\theta_2})$ for variational Schmidt decomposition and the one-side PQC $U(\bm{\theta}_1)\otimes I$ for logarithm negativity estimation. Since our algorithm makes no assumption about input states, we recommend the class of hardware-efficient PQCs \cite{bharti2021noisy} as the tunable unitary to further enhance the experimental realizability of our algorithms. In the experiments, the PQCs consist of single qubit rotations and CNOT gates or controlled-Z gates as the entangled gates. We refer the readers to Sec. \ref{sec:experiments} for more details of the circuit implementation in each experiment.

\subsection{Optimization}

Both gradient-based \cite{Mitarai2018quantum,Stokes_2020} and gradient-free \cite{parrish2019jacobi,Nakanishi2020Sequential,Ostaszewski_2021} optimization methods can be applied to our variational quantum algorithms. For gradient-based optimization, the analytic gradients of $C^{SD}$, $C^{LN}$ and $C^{ED}$ are supported by the parameter shift rule \cite{Schuld_2019}, which provides an unbiased estimate compared to the finite difference method \cite{Benedetti_2019}. We have adopted various optimization methods in both the numerical simulation and the real-device implementation of our algorithms (see Section \ref{sec:experiments} for more details about the optimization adopted in our experiments).

Here we also discuss the possible solutions to the gradient vanishing issue for our algorithms. According to Ref. \cite{McClean_2018}, our algorithms could suffer from `barren plateaus' throughout training. Additionally, this gradient vanishing issue is unlikely to be solved by a local cost function \cite{Cerezo_2021} due to the global connectivity brought by subcircuit $W^{-1}$. On the other hand, proposed strategies including layerwise learning \cite{skolik2021layerwise}, parameter correlations \cite{Volkoff_2021}, and a quantum convolutional neural network ansatz \cite{pesah2020absence} have been shown to
be helpful in certain training tasks. We leave the adaptation of these and other training techniques to our algorithms for future study.

\subsection{Summary of algorithms and comparisons}

Before moving to the experiments, we summarize our algorithms and explain improvements over known methods. From the previous discussion, one can observe that our algorithms share a similar framework: the target state is operated by local parameterized quantum circuits, and the computational basis measurement after a depth-2 subcircuit estimates the fidelity with some elaborated entangled state as the optimization function. The VQA approach provides practicality on near-term quantum devices, and our framework guarantees their efficiency due to the constant extra consumption of computational resources. Exploiting symmetry, our framework can further reduce the number of parameters in the optimization of the VQA. This idea of ``computing the state overlap with some entangled state after operating local unitaries'' lies at the heart of our algorithms' framework, and turns out to reflect entanglement properties for near-term devices. As the algorithms advance over known variational approaches, our framework technically and conceptually contributes to entanglement analysis of near-term quantum devices, and is worth further study, development, and extension, \eg, in multipartite cases.

Here we make a detailed comparison between our algorithms and known methods. For the Schmidt decomposition algorithm, methods proposed in Refs. \cite{larose2019variational,bravo2020quantum} are also based on the hybrid quantum-classical approach. Essentially, all three methods perform local unitaries on each party, and should diagonalize the marginal states after optimization over unitaries. In this sense, it is reasonable to assume that the three methods need the same expressibility \cite{sim2019expressibility} for the ansatz.

Compared to Ref \cite{larose2019variational}, our algorithm demands no extra qubit, while in their approach (with a global cost function), two copies of the state are consumed in each optimization iteration. Thus, the resource of state preparation as well as the quantum register is doubled. The method in Ref. \cite{bravo2020quantum} seems to require a slightly shallower circuit than our algorithm due to the 2-layer depth subcircuit $W^\dagger$. Nevertheless, our algorithm can detect the case where the input state is polluted into a separable but classically correlated state. With a similar argument as in the proof of Lemma \ref{lemma:containingRstate}, one can show that for a separable state $\max C^{SD}\le\max p_i^2$. So, if, for example, the input pure state $0.6\ket{00}+0.8\ket{11}$ decoherence to mixed state $0.36\proj{00}+0.64\proj{11}$, our algorithm will detect this decoheres, while the method in Ref. \cite{bravo2020quantum} will not be able to distinguish them.

For the entanglement detection algorithm, recent work in Ref. \cite{wang2022detecting} realized various positive (but not completely positive) map criteria, including the reduction criterion in the variational approaches. Their method employs the quasiprobability sampling, which suffers from an exponential sampling overhead in the number of qubits. Although our algorithm, as a separability criterion, is no stronger than what Wang \etal \cite{wang2022detecting} implemented, we estimate only one expectation value in each optimization iteration that saves us from exponential resource consumption for large quantum systems.

\section{Numerical simulation and implementation on quantum device}\label{sec:experiments}

In this section, we show the effectiveness of variational quantum algorithms in Schmidt decomposition and logarithm negativity estimation through numerical simulation and implementation on quantum devices. The simulation experiments are operated on the \href{https://qml.baidu.com}{\textit{Paddle quantum}} platform and the \href{https://quantum-hub.baidu.com/}{\textit{Quantum Leaf}} platform. We also realize our algorithm on the superconducting quantum device of Institute of Physics, Chinese Academy of Sciences through the \href{https://quantum-hub.baidu.com/}{\textit{Quantum Leaf}} platform. 

\subsection{Simulating on circuits with different depths}\label{subsec:numerical}

Here, we conduct numerical experiments to decompose a bipartite quantum system with 8 qubits to investigate the impact of different numbers of circuit layers. In the beginning, we randomly generate an 8-qubit bipartite entangled state $\ket{\psi}_{AB}$, and both parties have 4 qubits. Then, ewe apply PQCs $U_A(\vec{\theta_A})$ and $V_B(\vec{\theta_B})$ on the two parties respectively. The structure of the PQCs used in this task is shown in Fig. \ref{fig:layer}. As discussed above we adopt Eq. \eqref{eq:C^SD_W} as the cost function. This simulation is performed on the \href{https://qml.baidu.com}{\textit{Paddle Quantum}} platform.

\begin{figure}[ht]
\centering
\includegraphics[width=0.4\textwidth]{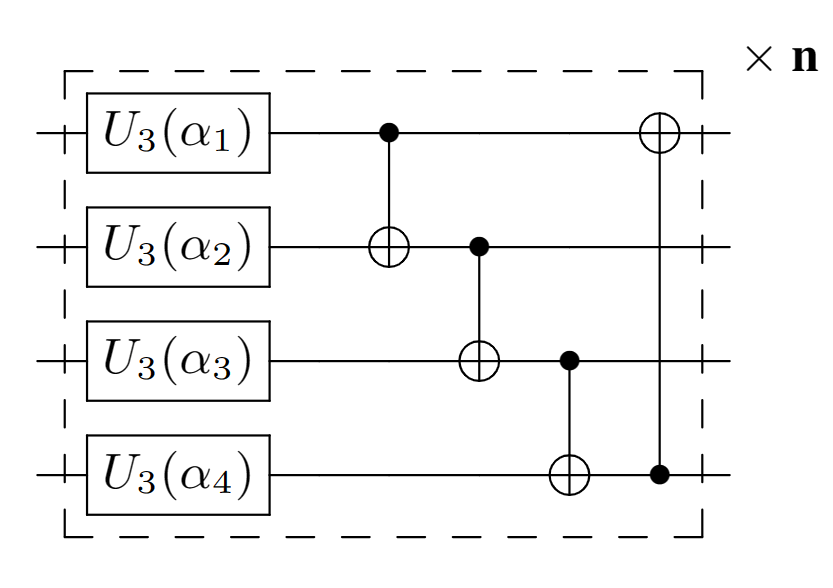}
\caption{Structure of one entangled layer. Here $U_3$ is the universal single-qubit template ($ZYZ$ rotation gate), $\alpha_j$ is the parameter of the gate, and \textbf{n} is the depth of PQC.}
\label{fig:layer}
\end{figure}

In this experiment, we set the depth of layers of PQCs to be $1,2,4,8$ , which means applying the structure in Fig. \ref{fig:layer} recurrently for depth times, to clarify the impact of the number of circuit layers. The results are shown in FIG. \ref{fig:DEPTH}. We use ADAM optimizer (a gradient-based optimizer) to maximize our cost function.

The simulation results are shown in Fig. \ref{fig:DEPTH}. The error (vertical axis) is defined as the squared $L2$ distance between the real and estimated Schmidt coefficient vectors, \ie, $\sum_i |c_j-c_j'|^2$, where the $c_j'$ are the estimated Schmidt coefficients and the $c_j$ are the actual values. It is obvious that circuits with larger depth have better performance in accuracy, satisfying the intuition that deeper circuits have better expressibility \cite{sim2019expressibility}.

\begin{figure}[ht]
\centering
\includegraphics[width=0.5\textwidth]{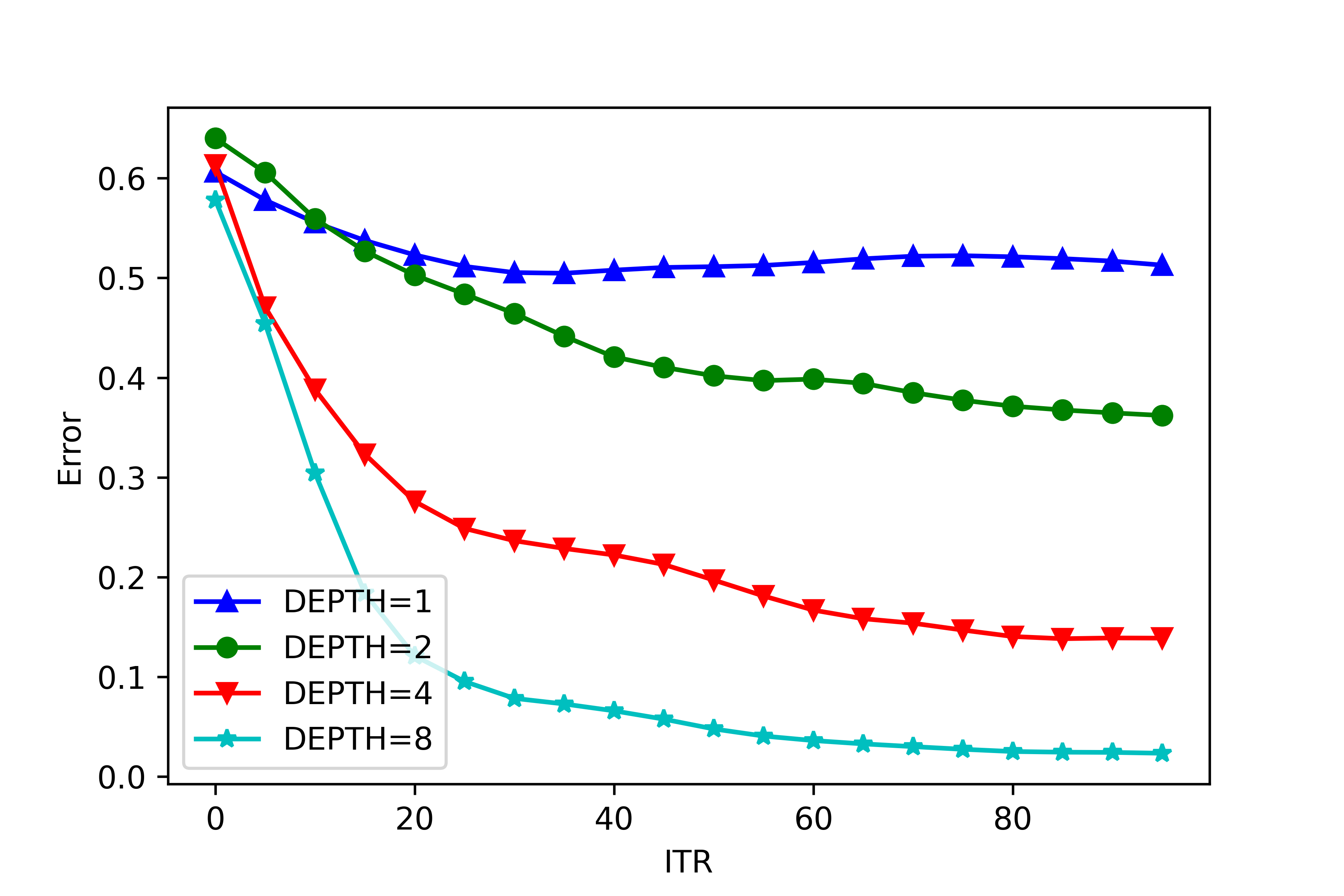}
\caption{Simulating 8-qubit Schmidt decomposition on PQCs with depth=1, 2, 4, and 8.}
\label{fig:DEPTH}
\end{figure}

\subsection{Decomposing the noisy state}\label{subsec:Decompose noisy state}
Next, we investigate the performance of our algorithm in decomposing quantum states affected by noisy channels. We take the amplitude damping channel $N^{amp}_p(\rho)$ and depolarizing channel $N^{depl}_p(\rho)$ into consideration, which are defined as
\begin{align*}
    N^{amp}_p(\rho) := & E_0\rho E_0^\dagger + E_1\rho E_1^\dagger,\\
    N^{depl}_p(\rho) := & (1-p)\rho+p\tr(\rho)\frac{I}{2},
\end{align*}
respectively, and

\begin{align*}
	E_0=\begin{bmatrix}1 & 0 \\
		0 & \sqrt{1-p}\end{bmatrix},E_1=\begin{bmatrix}0 & \sqrt{p} \\
		0 & 0\end{bmatrix},
\end{align*}
with the noise level $p\in [0,1)$. In this simulation, we decompose a bipartite system with one qubit in each party, and the input state is
\begin{equation}
\begin{aligned}
    \rho_{AB} &= \proj{\psi}_{AB}\\
    &= 
    \begin{pmatrix}
     0.455 &  0.459 &  0.136 & -0.137 \\
     0.459 &  0.463 &  0.137 & -0.138\\
     0.136 &  0.137 &  0.041 & -0.041\\
    -0.137 & -0.138 & -0.041 &  0.041
    \end{pmatrix}.
\end{aligned}
\label{eq:bipartite state}
\end{equation}

The Schmidt coefficients of the state are $c_1=0.958,c_2=0.286$. We employ noisy channels to each party before applying PQCs $U(\mathbf{\theta_1})$ and $V(\mathbf{\theta_2})$ which are single-qubit universal ansatz. This simulation is performed on the \href{https://qml.baidu.com}{\textit{Paddle quantum}} platform.
\begin{figure}[ht]
\centering
\includegraphics[width=0.5\textwidth]{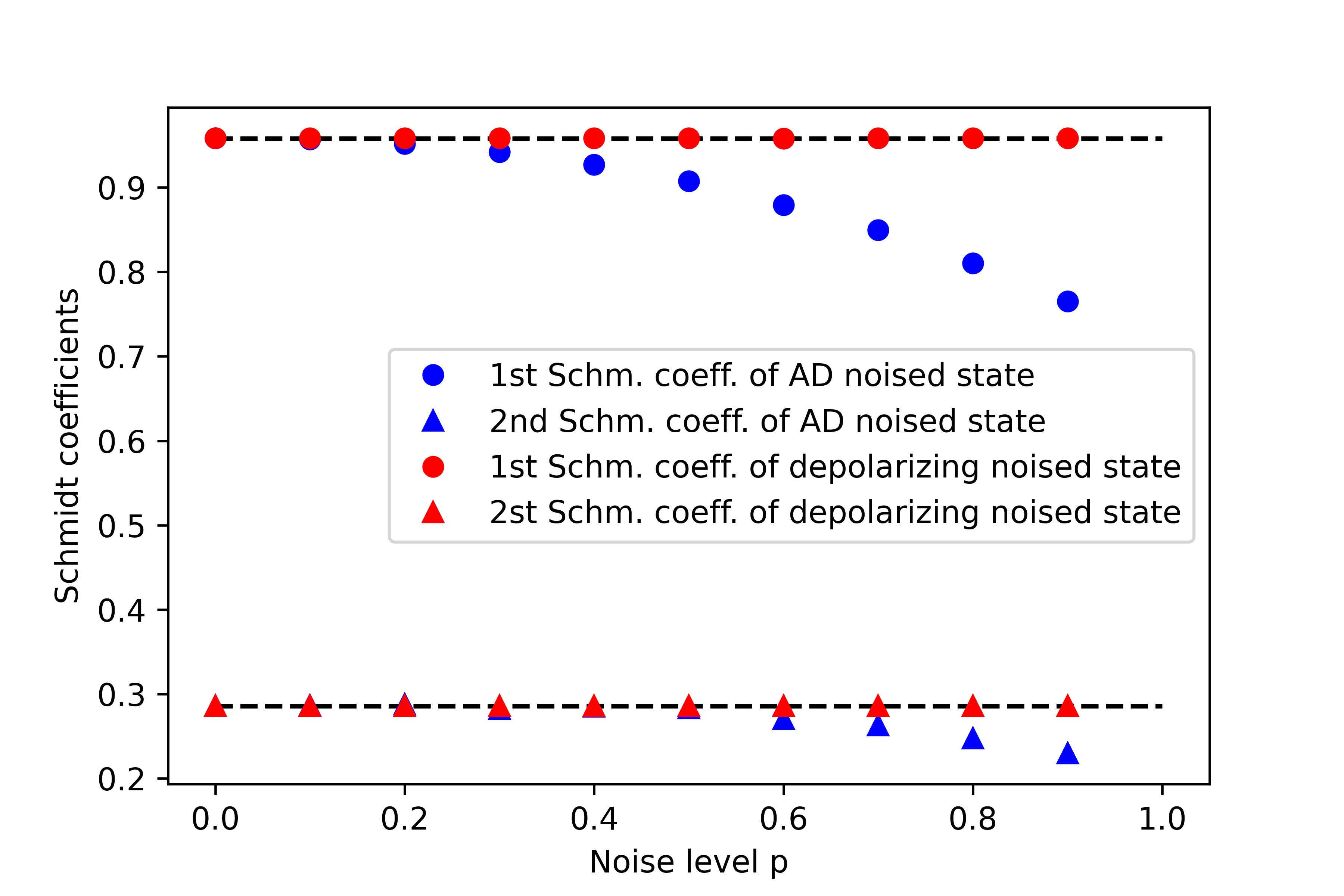}
\caption{Schmidt coefficients of quantum states affected by noisy states. The blue circles and triangles are the Schmidt coefficients of the amplitude damping (AD) noised state. The red circles and triangles are the Schmidt coefficients of the depolarizing noised state. The black dashed line is the theoretical Schmidt coefficients of the state without noise.}
\label{fig:noisy_state}
\end{figure}

Fig. \ref{fig:noisy_state} displays the Schmidt coefficients of the states affected by both noisy channels with different noise levels $p$. For the depolarizing channel, the estimated Schmidt coefficients equal the theoretical values (black dashed line) all the time; for the amplitude damping channel, when the noise level is low (smaller than 0.5), the estimated Schmidt coefficients are close to the theoretical values but diverge as the noise level increases. We can conclude that our algorithm is robust to depolarizing channels, and can achieve relatively accurate results if the state is noised by an amplitude damping channel (low noise level).

\subsection{Implementations on a superconducting quantum processor}\label{subsec:Quantum device}

We also apply our algorithm to decompose a 2-qubit bipartite quantum system on quantum devices and compare the results with the values achieved from the simulation. The experiments are performed on the \href{https://quantum-hub.baidu.com/}{\textit{Quantum Leaf}} platform, loading the quantum device from the Institution of Physics, Chinese Academy of Science (IoP CAS). This quantum device contains 10 direct coupling transmon qubits, whose topology is neighbor coupling one-dimensional chainlike.

The input state $\ket{\psi}_{AB}$ we use here is exactly the same as that in Sec. \ref{subsec:Decompose noisy state}. On the quantum device, the input state could be prepared by applying two $Ry$ gates on the first and second qubits with parameters $0.58$ and $1.58$, respectively, followed by a Controlled-$Z$ gate. Next, we operate the parameterized quantum circuit on qubits and use sequential minimal optimization \cite{Nakanishi2020Sequential} (gradient-free) to optimize the parameters until the cost converges to its maximum. We repeat 20 independent experiments with the same input states and randomly initialized parameters. 

\begin{figure}[ht]
\centering
\includegraphics[width=0.45\textwidth]{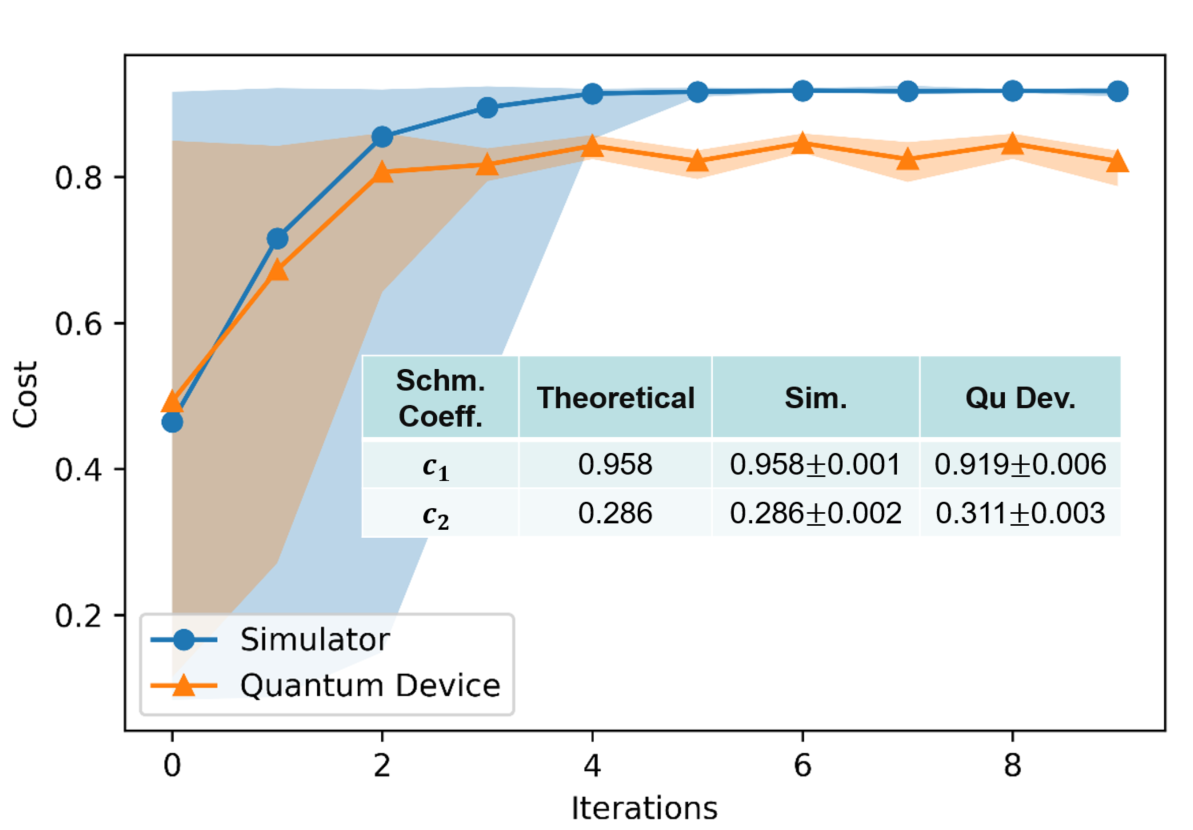}
\caption{Learning curves with respect to the simulator and quantum device. The shadowed area spans from the minimal to maximal value of 20 experiments. Theoretically, the Schmidt coefficients are $0.958$ and $0.286$, respectively. The simulator-estimated coefficients are $0.958 \pm 0.001$ and $0.286 \pm 0.002$, while the results from the CAS quantum device are $0.919 \pm 0.006$ and $0.311 \pm 0.003$.}
\label{fig:dev_vs_sim}
\end{figure}

The experimental results are displayed in Fig. \ref{fig:dev_vs_sim}. As demonstrated, the cost raises dramatically in the first two iterations, and the simulator converges to a higher value than that of the quantum device, which may be caused by the quantum device noise. Correspondingly, the achieved Schmidt values by the simulator ($c_1=0.958 \pm 0.001$ and $c_2=0.286 \pm 0.002$) are closer to the theoretical values ($c_1=0.958$ and $c_2=0.286$) than those of the quantum device ($c_1=0.919\pm0.006$ and $c_2=0.311\pm0.003$). 

\subsection{Estimating the logarithm negativity}\label{subsec:Logarithm negativity}
From Corollary \ref{cor:norm}, we could derive the logarithm negativity estimation, which quantifies the entanglement of the input state $\ket{\psi}_{AB}$. Here, we investigate the accuracy of our method in estimating the logarithm negativity at different input state ranks by comparing it with theoretical values. 

In this simulation, $\ket{\psi}_{AB}$ is bipartite with 3 qubits in each party. In order to make the results comparable, the amplitude of the input state is designed to be identity, \ie, $\ket{\psi}=\frac{1}{\sqrt{r}}\sum_{i=1}^r \ket{i}$, where $r$ denotes the Schmidt rank ranging from 1 to 8. The circuit for the state preparation is displayed in Appendix \ref{sec:SM_beta}. Then, we apply the parameterized quantum circuit, which is shown in Fig. \ref{fig:LG_cir}, followed by the subcircuit $W^{-1}$ as in Fig. \ref{fig:diagram}. The sequential minimal optimization \cite{Nakanishi2020Sequential} is utilized to maximize the cost function, which is set as Eq. \eqref{eq:LN_loss_func}, until it converges to its maximum value. At this stage, the logarithm negativity of the input state $\ket{\psi}$ can be calculated by Eq. \eqref{eq:LN_cal}. We repeat the calculation for logarithm negativity 10 times for the input state with different Schmidt ranks, \ie, $r=1,2,3,4,5,6,7,8$, and compare the calculated values with the theoretical values. This simulation is performed on the \href{https://quantum-hub.baidu.com/}{\textit{Quantum Leaf}} platform. The results are shown in Fig. \ref{fig:rank}. From the results, we can tell that the estimated values from our method are close to the ideal values even for input states with a high Schmidt rank. The difference between the theoretical and estimated values comes from the statistical error due to finite sampling. In this numerical simulation, the results are calculated by sampling each circuit execution 1024 times. It should be noted that if a shallow circuit is adopted, the experimental results may exhibit a degradation in performance.

\begin{figure}[ht]
\centering
\includegraphics[width=0.4\textwidth]{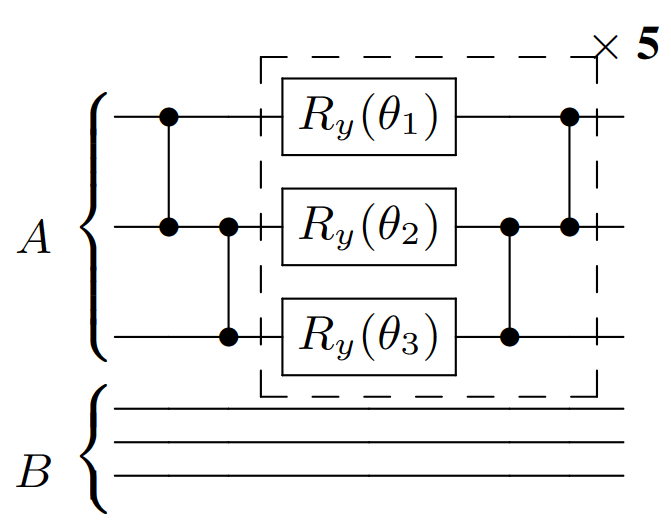}
\caption{Parameterized circuit ansatz for logarithm negativity estimation. The circuits in the dashed box should be repeated five times.}
\label{fig:LG_cir}
\end{figure}

\begin{figure}[ht]
\centering
\includegraphics[width=0.45\textwidth]{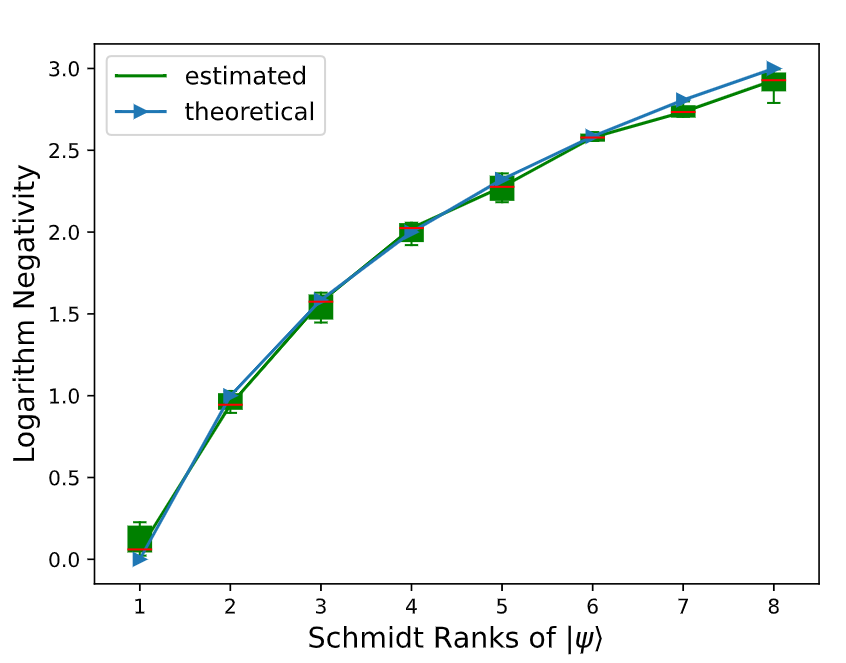}
\caption{Logarithm negativity estimation with respect to different Schmidt ranks. The red bar is the median logarithm negativity calculation repeated 10 times, and the green rectangles and bar stand for the standard deviation and error, respectively. This experiment includes the no-entanglement (rank $r=1$) case and the maximally entangled ($r=8$) case.}
\label{fig:rank}
\end{figure}

\section{Conclusion and outlook}\label{sec:conclusion}

We proposed variational quantum algorithms to realize Schmidt decomposition and estimated logarithm negativity for pure states, and detect entanglement for general mixed states on near-term quantum devices. Compared to previous methods, our Schmidt decomposition algorithms are more efficient in terms of resource consumption or more stable for polluted inputs. For general mixed states, a new method to detect entanglement on near-term quantum devices was derived, which efficiently detects entanglement for specific families of states and detects distillability in general. We have shown the validity and practicality of our methods via both numerical simulations and experimental implementations. Numerical simulations show that the variational Schmidt decomposition algorithm is resilient to amplitude damping and depolarizing noises. Experiments on superconducting quantum devices show relatively accurate results, illustrating that our algorithms are executable and valid on near-term quantum devices. Our variational quantum algorithms notably provide efficient and practical tools for entanglement quantification and analysis in the NISQ era.

Beyond quantifying entanglement for bipartite pure states, our algorithms could have a wide range of applications and extensions. Direct applications of our work include quantum data compression. For example, if the two parties of the state have distinct dimensions (or numbers of qubits), it is always possible for the larger party to reduce its scale by applying
the unitary transformation obtained through training. Moreover, it would be of independent interest to further explore the power of local PQC, which is a key ingredient in our method. One more extension of our framework is to consider the multiparty case. If the Schmidt decomposition of a multiparty entangled state exists \cite{das2018necessary, pati2000existence}, our method can be adapted to decompose these states. We leave the adaptation of our techniques to investigate multiparty pure or mixed states for future study, as well as its implications in many-body quantum systems that exist in today's various quantum computation platforms \cite{Bruzewicz_2019,Kjaergaard2020}.

\section*{Code availability}
The simulation and experimental codes are available from Github\cite{code_website}.

\section*{Acknowledgements.}

We thank Zelin Meng for helpful discussions. We also appreciate valuable help from the developers of the Quantum Leaf platform as well as the engineers of the IoP-CAS superconducting device. Part of this work was done when R. C., B. Z., and X. W. were at Baidu Research.

\bibliography{smallbib}

\onecolumngrid
\vspace{2cm}
\begin{center}
{\textbf{\large Appendix of Variational Quantum Algorithms for Schmidt Decomposition}}
\end{center}

\renewcommand{\theequation}{S\arabic{equation}}
\renewcommand{\theproposition}{S\arabic{proposition}}
\renewcommand{\thelemma}{S\arabic{lemma}}

\setcounter{equation}{0}
\setcounter{table}{0}
\setcounter{section}{0}
\setcounter{proposition}{0}
\setcounter{lemma}{0}

\section{Selection of parameters in subcircuit \texorpdfstring{$W_A$}{}}\label{sec:SM_alpha}

Here we show that how to choose parameters $\alpha_j$ for $j\in[1,n]$ in the circuit $W_A$ defined in Section \ref{subsec:costfunction} when $n\ge2$. 

We first choose three real numbers $\beta_n$, $\beta_{n-1}$, and $\Delta\beta$ satisfying $1<\beta_n<\beta_{n-1}$ and $\Delta\beta>0$. If $n>2$, we recursively set $\beta_j=\Pi_{k=j+1}^{n}\beta_k+\Delta\beta$ for $j=n-2,\dots,1$. Then we let $\gamma_j=\beta_j/(\beta_j+1)$ and let $\alpha_j=2\arccos\sqrt{\gamma_j}$. We remark that the parameters are evaluated on the classical computer once for each algorithm run, with running time $O(n)$.

To show that $W_A$ prepares the state $\ket{\Phi}_{A}=\sum_jp_j\ket{j}_A$ such that $p_j$ strictly decreases, let's look at the coefficients $\{p_j\}$ determined by $\{\alpha_j\}$. The output state can be written as 
\begin{align*}
    \ket{\Psi}_{A}=&\bigotimes_{j=1}^n\left(\cos\frac{\alpha_j}{2}\ket{0}_j+\sin\frac{\alpha_j}{2}\ket{1}_j\right)\nonumber\\
    =&\bigotimes_{j=1}^n\left(\sqrt{\gamma_j}\ket{0}_j+\sqrt{1-\gamma_j}\ket{1}_j\right).
\end{align*}

Note that $\beta_j>1$ implies $\gamma_j>1-\gamma_j$. Let the binary representation of $j$ be $j=(j_1j_2\dots j_n)_2$ where $j_l,l\in[1,n]$ are binary bits. Then the coefficient $p_j$ is 
\begin{align*}
    p_j=\Pi_{l=1}^n\sqrt{\gamma_l^{1-j_l}(1-\gamma_l)^{j_l}}.
\end{align*}

(For example, when $n=4$ and $j=9=(1001)_2$, we have  $p_j=p_9=\sqrt{\gamma_1^0\gamma_2^1\gamma_3^1\gamma_4^0(1-\gamma_1)^1(1-\gamma_2)^0(1-\gamma_3)^0(1-\gamma_4)^1}=\sqrt{(1-\gamma_1)\gamma_2\gamma_3(1-\gamma_4)}$.)

For any $j<k$, denote by $n_0$ the highest bit such that $j_{n_0}$ and $k_{n_0}$ are different (hence $j_{n_0}=0,k_{n_0}=1$ and $j_l=k_{l}$ for $l<{n_0}$). Consider the index $j_+=(j_1\dots j_{n_0}11\dots1)_2$ and $k_-=(k_1\dots k_{n_0}00\dots0)_2$, so $j\le j_+$ and $k\ge k_-$. Also we have that
\begin{align*}
    \frac{p_j}{p_{j_+}}=\Pi_{l=n_0,j_l\neq1}^n\sqrt{\frac{\gamma_l}{1-\gamma_l}}\ge1, \frac{p_k}{p_{k_-}}=\Pi_{l=n_0,k_l\neq0}^n\sqrt{\frac{1-\gamma_l}{\gamma_l}}\le1,
\end{align*}
so $p_j\ge p_{j_+}$ and $p_k\le p_{k_-}$. Finally, we compare $p_{j_+}$ and $p_{k_-}$:
\begin{align*}
    \frac{p_{j_+}}{p_{k_-}}=\sqrt{\frac{\gamma_{n_0}}{1-\gamma_{n_0}}\Pi_{l=n_0+1}^n\frac{1-\gamma_l}{\gamma_l}}.
\end{align*}
By the definition of $\beta$ we have that $\beta_{n_0}>\Pi_{l=n_0+1}^n\beta_l$. Together with $\beta_j=\gamma_j/(1-\gamma_j)$ it holds that
\begin{align*}
    \frac{p_{j_+}}{p_{k_-}}=\sqrt{\frac{\gamma_{n_0}}{1-\gamma_{n_0}}\Pi_{l=n_0+1}^n\frac{1-\gamma_l}{\gamma_l}}=\sqrt{\beta_{n_0}\Pi_{l=n_0+1}^n\frac{1}{\beta_l}}>1.
\end{align*}
So $p_j\ge p_{j_+}> p_{k_-}\ge p_k$, guaranteeing a strictly decreasing sequence $\{p_j\}$.

\section{Circuits for preparing state with different Schmidt ranks}\label{sec:SM_beta}

Here we display the circuits to generate a 6-qubit bipartite entangled state with desired Schmidt rank. The circuits are shown below. It is notable that the first three qubits belongs to one party and left three qubits belongs to the other. For example, if we wish to prepare a input state with Schmidt rank equals to 5, we should firstly apply the circuit in Fig. \ref{fig:rank=5}, then operate the second circuit in Fig. \ref{fig:rank circuit}.

\begin{figure}[htbp]\label{fig:rank state generator}
\subfigure[Schmidt rank=1.]{
\includegraphics[width=4cm]{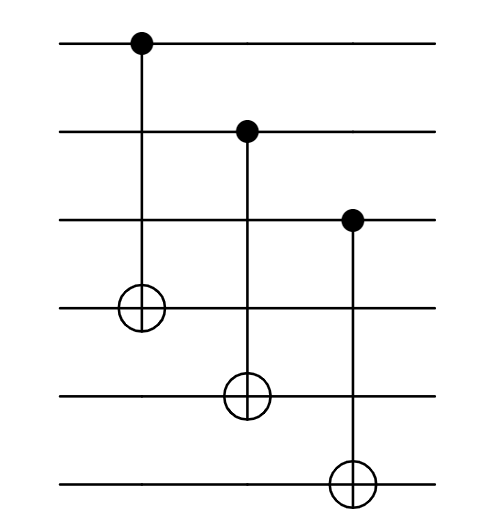}}
\subfigure[Schmidt rank=2.]{
\includegraphics[width=4cm]{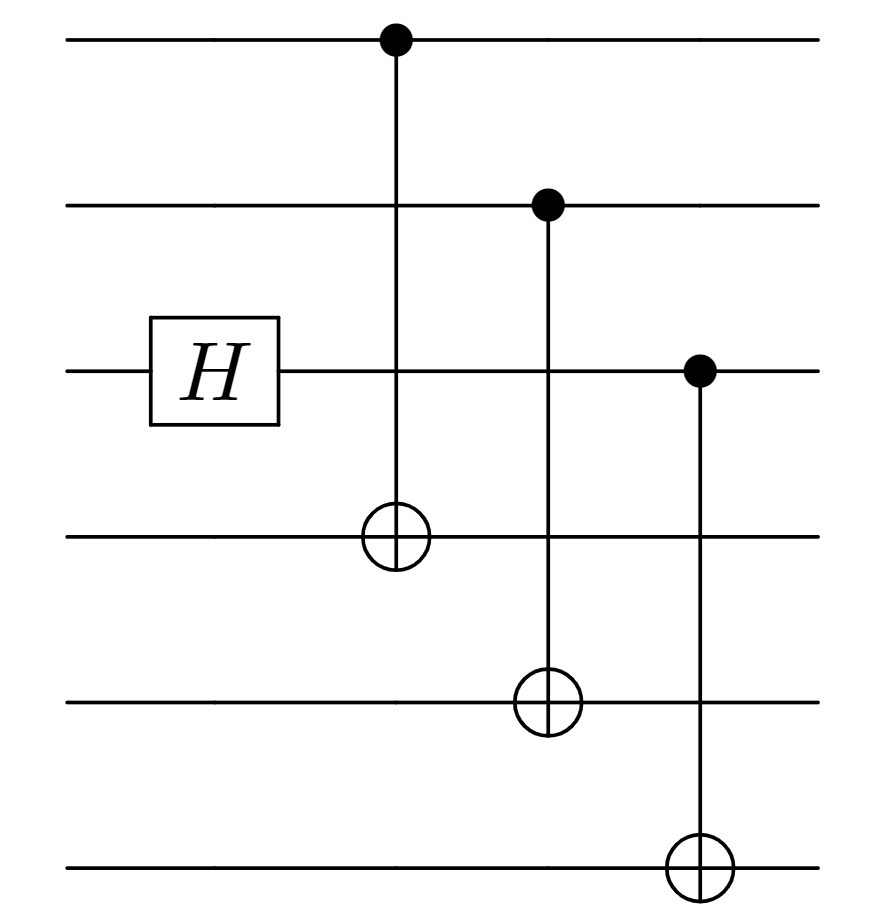}}
\subfigure[Schmidt rank=3, parameter $\alpha = 2\arcsin(\sqrt{1/3})\approx 1.230959$.]{
\includegraphics[width=7cm]{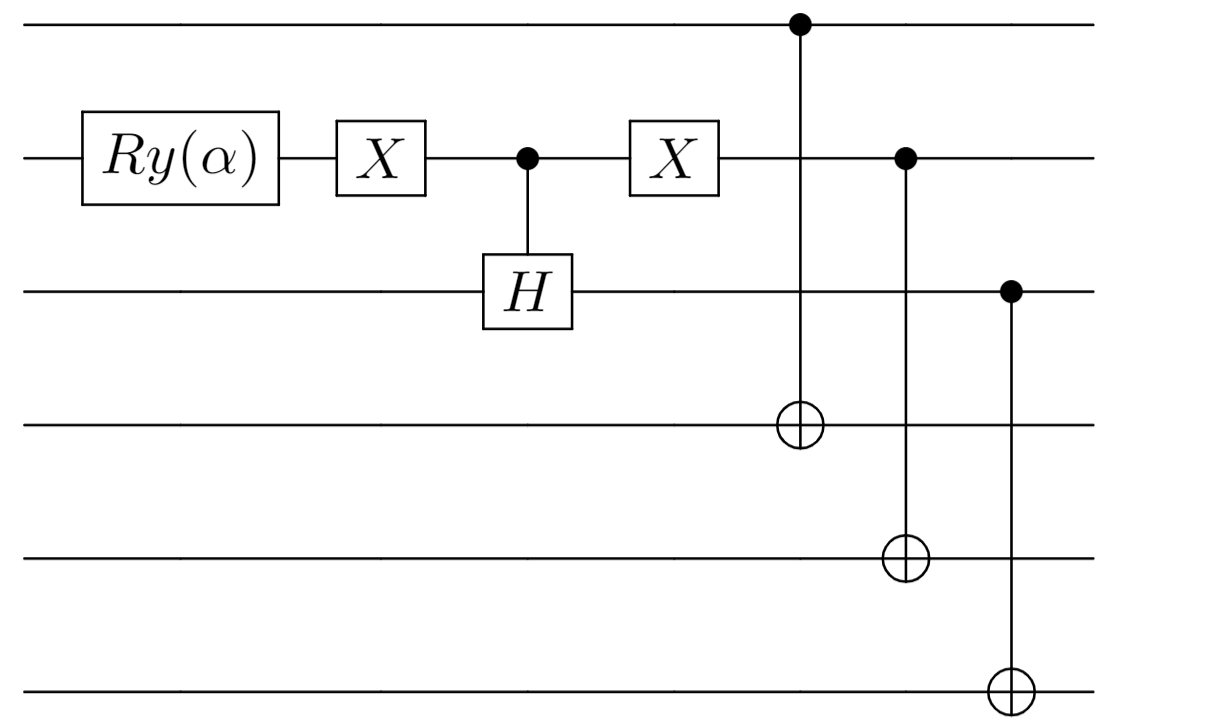}}
\subfigure[Schmidt rank=4.]{
\includegraphics[width=4cm]{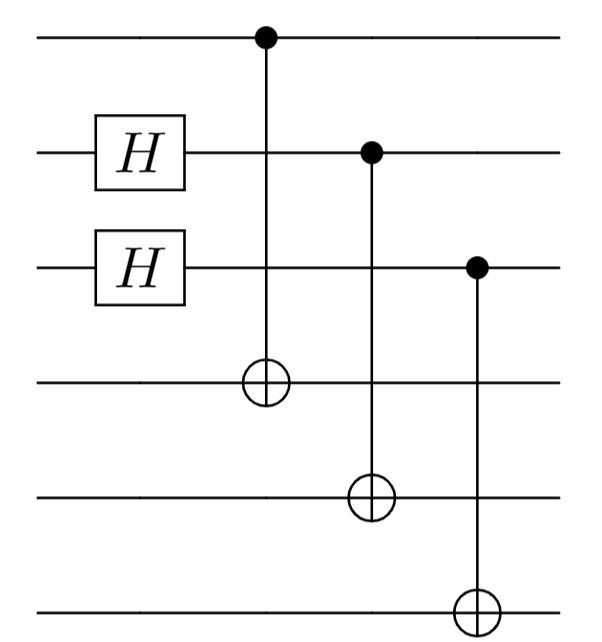}}
\subfigure[Schmidt rank=5, parameter $\alpha =2\arcsin(\sqrt{1/5})\approx 0.927295$.]{
\label{fig:rank=5}
\includegraphics[width=7.5cm]{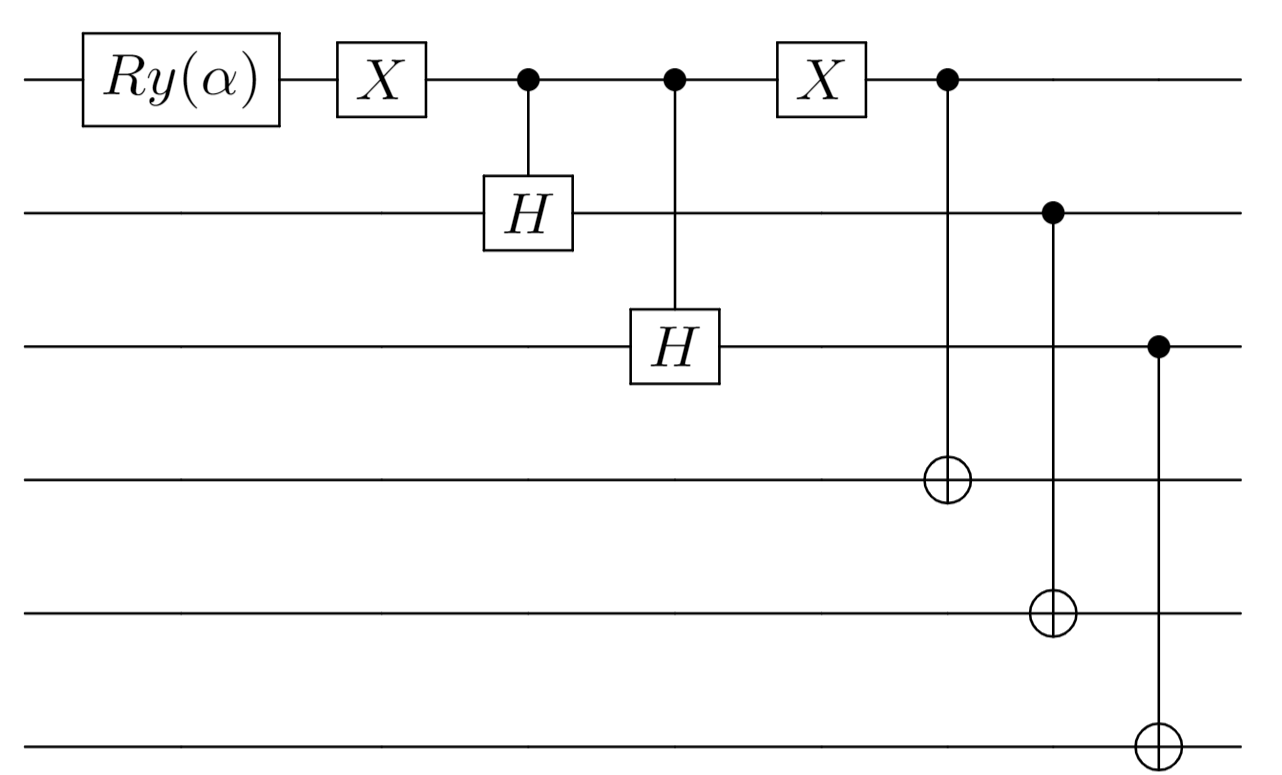}}
\subfigure[Schmidt rank=6, parameter $\alpha = 2\arcsin(\sqrt{1/3})\approx 1.230959$.]{
\includegraphics[width=6cm]{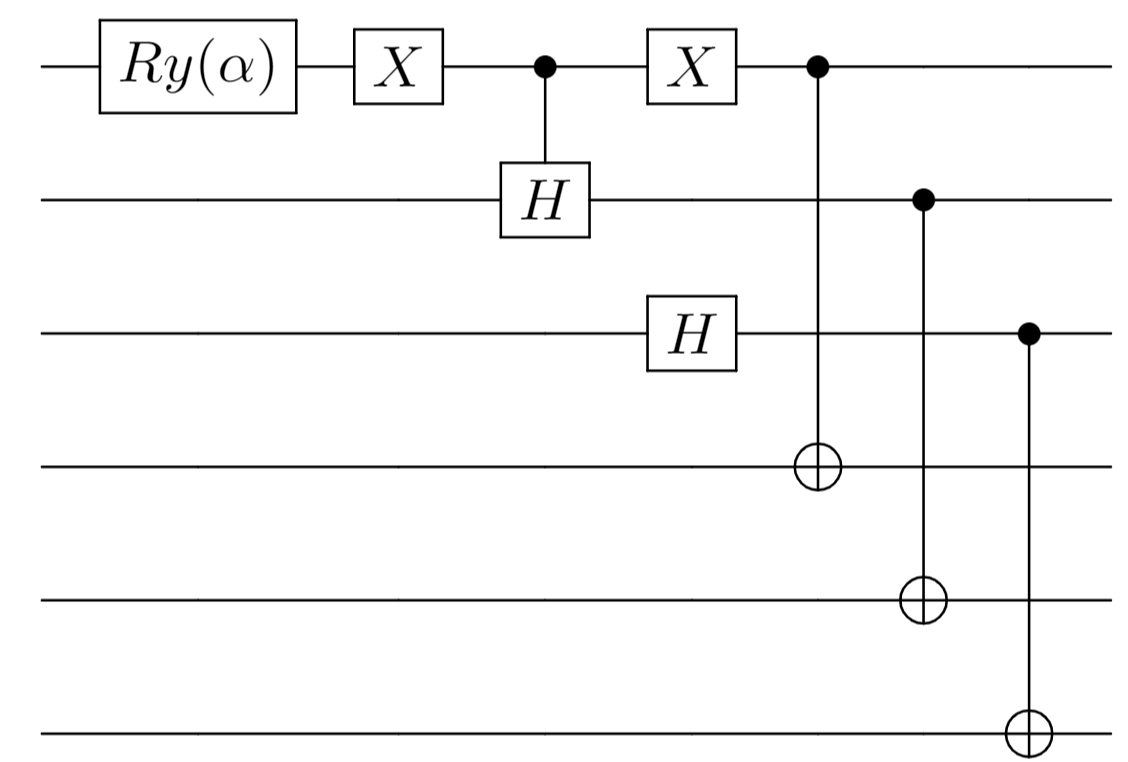}}
\subfigure[(approximate) Schmidt rank=7, parameter $\alpha_1=1.4$, $\alpha_2=\pi/4$, $\alpha_2=-\pi/4$.]{
\includegraphics[width=7cm]{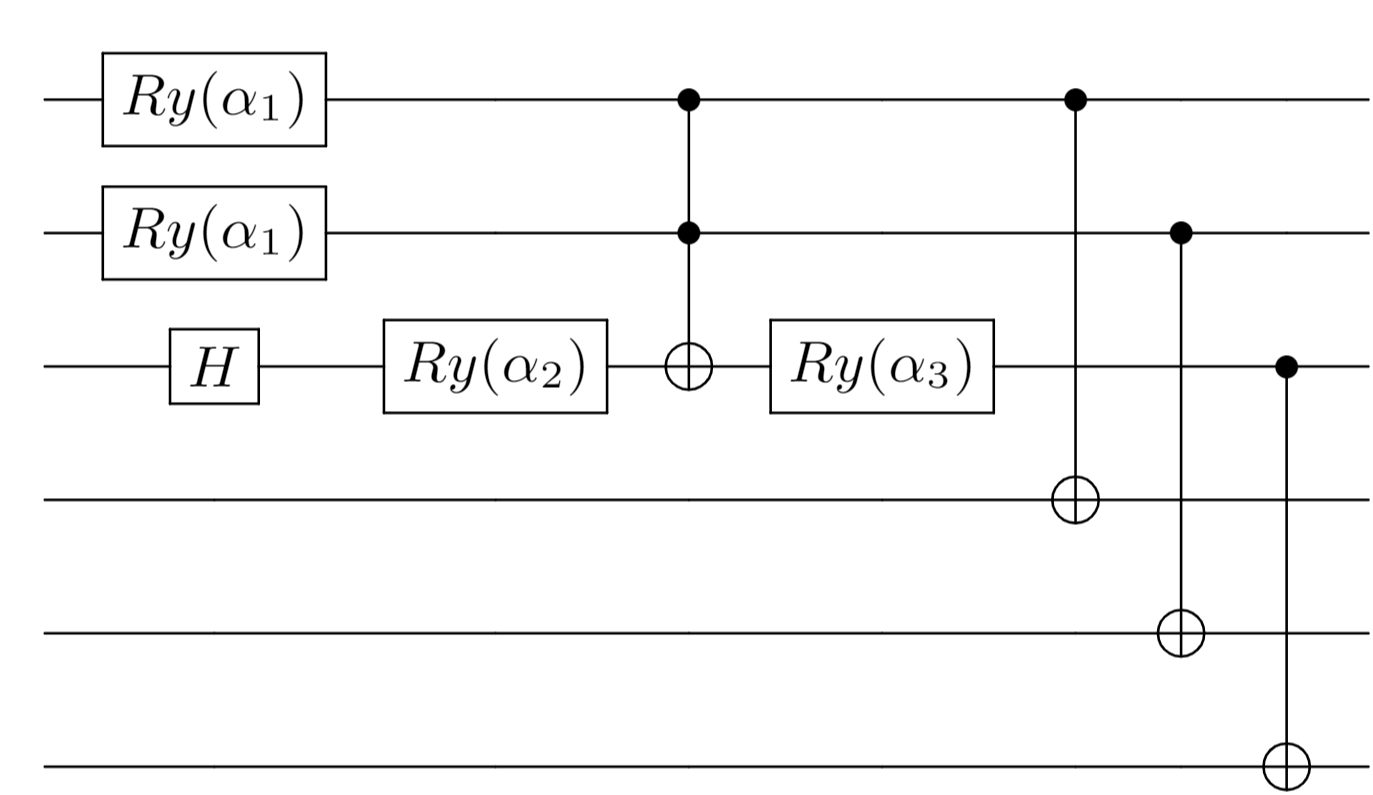}}
\subfigure[Schmidt rank=8.]{
\includegraphics[width=5cm]{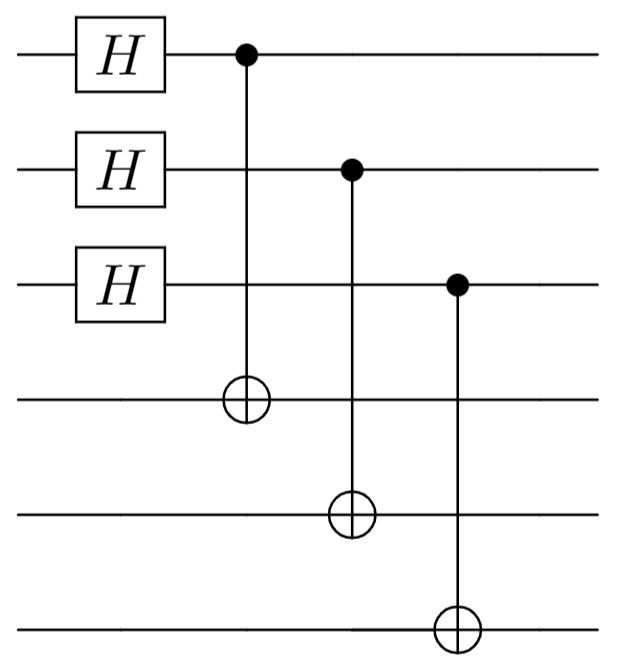}}
\subfigure[Circuit for second step to prepare input state. The parameters are random values.]{
\label{fig:rank circuit}
\includegraphics[width=5cm]{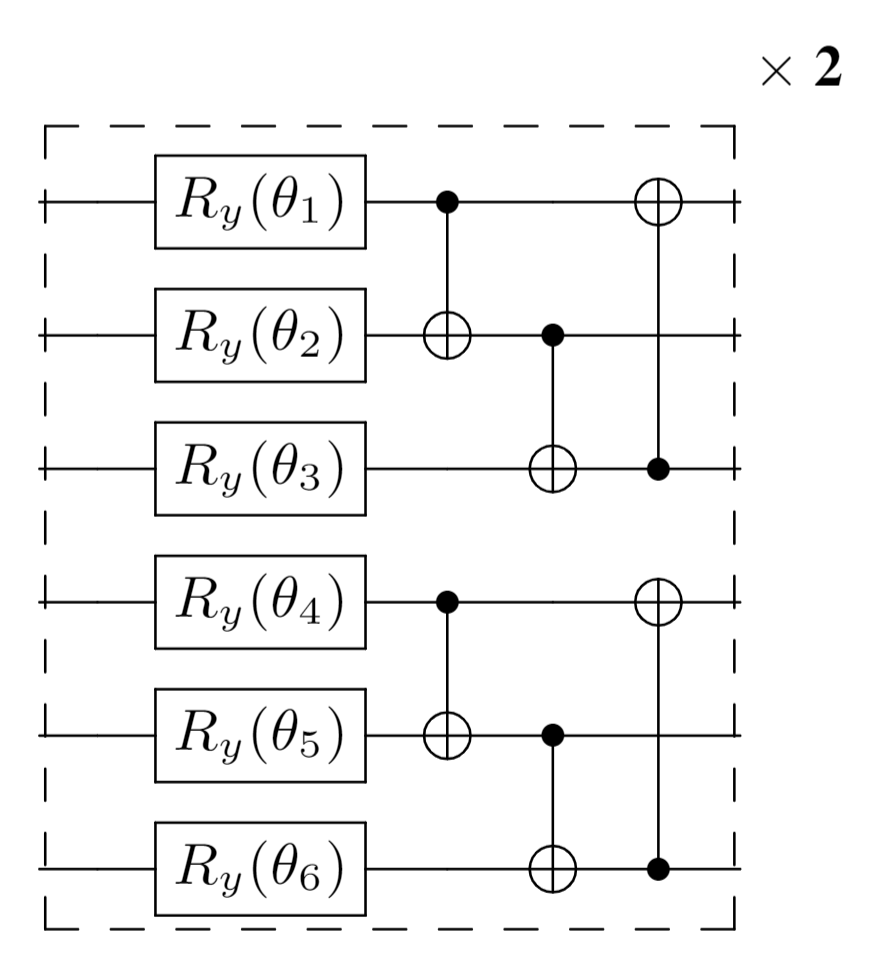}}
\caption{Quantum circuits for preparing input state with different Schmidt ranks. (a)$\sim$(h) are the first steps to prepare the input state, (i) is the second step.}
\end{figure}

\section{Algorithm boxes}\label{sec:SM_algo}
\renewcommand{\algorithmicrequire}{\textbf{Input:}}
\renewcommand{\algorithmicensure}{\textbf{Output:}}
\vspace{0cm}
\begin{algorithm}[H] 
\caption{Variational Quantum Algorithm for Schmidt Decomposition}
\label{alg:vqasd}
\begin{algorithmic} 
\REQUIRE bipartite $d$-dimension pure state $\rho_{AB}$, parameterized quantum circuit (PQC) $U(\bm\theta_1)$ and $V(\bm\theta_2)$, inverse of state preparing circuit $W^{-1}$ (Fig. \ref{fig:diagram}), number of iterations ITR;
\ENSURE Schmidt coefficients $\{c_j\}$ and Schmidt bases $\{u_j\}, \{v_j\}$.

\STATE Initialize parameters $\bm\theta$.

\STATE \color{blue}\texttt{\# Training iteration}\color{black}

\FOR{itr $=1,\ldots,$ ITR}
 
\STATE Apply $U(\bm\theta_1)$ to party $A$ and $V(\bm\theta_2)$ to party $B$ in state $\rho_{AB}$, and the state $\rho_{AB}$ becomes $\Tilde{\rho}_{AB}$.

\STATE Apply the inverse of state preparing circuit $W^{\dagger}$ to the state $\Tilde{\rho}_{AB}$.

\STATE Compute the cost function $C^{SD}=\text{Pr}[{\bm b}={\bm 0}]=\text{Tr}[W^{\dagger}\Tilde{\rho}_{AB}W\ket{0}\!\bra{0}_{AB}]$.

\STATE Maximize the cost function $C^{SD}$ and update parameters $\bm\theta$.

\ENDFOR

\STATE \color{blue}\texttt{\# Coefficient readout}\color{black}

\STATE Denote the trained PQCs as $U(\bm\theta^{opt})$ and $V(\bm\theta^{opt})$.

\STATE Apply the trained PQCs $U(\bm\theta^{opt})$ to the party $A$, and $V(\bm\theta^{opt})$ to the party $B$ in target state $\rho_{AB}$, then achieve $\Tilde{\rho}'_{AB}$.

\STATE Measure the reduced state $\Tilde{\rho}'_{A}$ on the computational basis $\{\ket{j}_A\}$. The $j$-th Schmidt coefficient is calculated by $c_j=\sqrt{\Pr[\bm b=j]}=\sqrt{\bra{j}\Tilde{\rho}'_{A}\ket{j}}$. 
The corresponding Schmidt bases are $u_j=U^{\dagger}(\bm\theta^{opt})\ket{j}_A$ and $v_j=V^{\dagger}(\bm\theta^{opt})\ket{j}_B$, respectively for $j\in [0, d-1]$.

\end{algorithmic}
\end{algorithm}

\begin{algorithm}[H] 
\caption{Variational Quantum Algorithm for Logarithm Negativity Estimation}
\label{alg:vqaln}
\begin{algorithmic} 
\REQUIRE bipartite $d$-dimension pure state $\rho_{AB}$, parameterized quantum circuit (PQC) $U(\bm\theta)$, inverse of state preparing circuit $W'^{-1}$ (Fig. \ref{fig:diagram}), number of iterations ITR;

\ENSURE logarithm negativity of the bipartite pure state $E_N(\rho_{AB})$.

\STATE Initialize parameters $\bm\theta$.

\STATE \color{blue}\texttt{\# Training iteration}\color{black}

\FOR{itr $=1,\ldots,$ ITR}

\STATE Apply $U(\bm\theta)$ to party $A$ in state $\rho_{AB}$, and the state  $\rho_{AB}$ becomes $\Tilde{\rho}_{AB}$.

\STATE Apply the inverse of state preparing circuit $W'^{\dagger}$ to the state $\Tilde{\rho}_{AB}$.

\STATE Calculate the loss function $C^{LN} =\text{Pr}[{\bm b}={\bm 0}]=\text{Tr}[W'^{\dagger}\Tilde{\rho}_{AB}W'\ket{0}\!\bra{0}_{AB}]$.

\STATE Maximize the loss function $C^{LN}$, and update parameters $\bm\theta$.

\ENDFOR

\STATE \color{blue}\texttt{\# Logarithm negativity readout}\color{black}

\STATE Denote the maximum loss function as $C^{LN}_{max}$, and the logarithm negativity is calculated by $E_N(\rho_{AB}) = \text{log}_2(C^{LN}_{max})+n$

\end{algorithmic}
\end{algorithm}

\begin{algorithm}[H] 
\caption{Variational Quantum Algorithm for Entanglement Detection}
\label{alg:vqaed}
\begin{algorithmic} 
\REQUIRE bipartite $d$-dimension quantum state $\rho_{AB}$, parameterized quantum circuit (PQC) $U(\bm\theta)$, inverse of state preparing circuit $W'^{-1}$ (Fig. \ref{fig:diagram}), number of iterations ITR;

\ENSURE entanglement detection of the bipartite quantum state.

\STATE Initialize parameters $\bm\theta$.

\STATE \color{blue}\texttt{\# Training iteration}\color{black}

\FOR{itr $=1,\ldots,$ ITR}

\STATE Apply $U(\bm\theta)$ to party $A$ in state $\rho_{AB}$, and the state  $\rho_{AB}$ becomes $\Tilde{\rho}_{AB}$.

\STATE Apply the inverse of state preparing circuit $W'^{\dagger}$ to the state $\Tilde{\rho}_{AB}$.

\STATE Calculate the loss function $C^{ED} =\text{Pr}[{\bm b}={\bm 0}]=\text{Tr}[W'^{\dagger}\Tilde{\rho}_{AB}W'\ket{0}\!\bra{0}_{AB}]$.

\STATE Maximize the loss function $C^{ED}$, and update parameters $\bm\theta$.

\STATE \textbf{If} loss function $C^{ED}>1/d$:

\STATE ~~Entanglement is detected.

\STATE ~~\textbf{break}

\ENDFOR

\STATE \textbf{If} loss function $C^{ED}>1/d$:

\STATE ~~Entanglement is detected.

\end{algorithmic}
\end{algorithm}

\section{Examples that a $\chi$-state violates the reduction criterion}

\label{sec:SM_apnoisy}
We show that certain AD-noisy Bell states (Bell state under one-sided amplitude damping noise) have $\chi<1/d=1/2$, but violates the reduction criterion. 

Let $E_0=\proj{0}+\sqrt{1-\gamma}\proj{1}$, $E_1=\sqrt{\gamma}\ket{0}\!\bra{1}$, $\mathcal{N}_{AP}(\cdot)=E_0\cdot E_0^\dagger+E_1\cdot E_1^\dagger$, then the AD-noisy Bell states are expressed by $\rho=\mathcal{N}_{AP}\otimes \operatorname{id}(\proj{\Phi^+_2})$.

Let the PQC $U=R_z(z_1/2)R_y(y/2)R_z(z_2/2)$, then direct calculation shows
$$
\chi(\rho)=\frac{1}{2}\max_{z_1,y,z_2}1+\sqrt{1-\gamma}\cos(z_1+z_2)+\cos(y)(1-\gamma+\sqrt{1-\gamma}\cos{(z_1+z_2)})=\frac{2+2\sqrt{1-\gamma}-\gamma}{4}.
$$
So $\chi(\rho)>1/2\iff\gamma<2\sqrt{2}-2\approx0.828$. On the other hand, all AD-noisy Bell states are entangled unless $\gamma=1$, as they all violate the PPT criterion. So $\chi$ fails to detect entanglement when $\gamma\in(2\sqrt{2-2},1)$. But the reduction criterion can detect entanglement for those states, as it is a necessary and sufficient condition in the $2\times2$ case. 

\end{document}